\definecolor{ao(english)}{rgb}{0.0, 0.5, 0.0}
\newcommand\nn{\nonumber\\}
\newcommand\non{\nonumber}
\newcommand{\bma}{\left(\begin{array}}
\newcommand{\ema}{\end{array}\right)}
\newcommand{\be}{\begin{equation}}
\newcommand{\ee}{\end{equation}}
\newcommand{\ben}{\begin{equation*}}
\newcommand{\een}{\end{equation*}}
\newcommand{\ba}{\begin{eqnarray}}
\newcommand{\ea}{\end{eqnarray}}
\newcommand{\ban}{\begin{eqnarray*}}
\newcommand{\ean}{\end{eqnarray*}}
\newcommand{\bs}{\begin{subequations}}
\newcommand{\es}{\end{subequations}}
\newcommand{\bc}{\begin{center}}
\newcommand{\ec}{\end{center}}
\newcommand{\ve}{\varepsilon}
\newcommand{\Pl}{{\text{\tiny Pl}}}
\newcommand{\lp}{\ell_\Pl}
\newcommand{\tpl}{t_\Pl}
\newcommand{\Mpl}{M_\Pl}
\newcommand{\au}[2]{#1.~#2}
\newcommand{\arX}[1]{\href{http://arxiv.org/abs/#1}{{\cob arXiv:#1}}}
\newcommand{\oarX}[1]{\href{http://arxiv.org/abs/#1}{{\cob #1}}}
\newcommand{\book}[5]{\emph{#1}, #2, #3, #4 (#5)}
\newcommand{\books}[4]{\emph{#1}, #2, #3 (#4)} 
\newcommand{\doin}[6]{\href{http://dx.doi.org/#1}{{\cob {\it #2 #3} {\bf #4} (#6) #5}}}
\newcommand{\doinn}[5]{\href{http://dx.doi.org/#1}{{\cob {\it #2} {\bf #3} (#5) #4}}}
\newcommand{\doij}[5]{\href{http://dx.doi.org/#1}{{\cob {\it #2} {\bf #3} (#5) #4}}}
\newcommand{\ndoinn}[5]{\href{#1}{{\cob {\it #2} {\bf #3} (#5) #4}}}
\newcommand{\procsinm}[5]{in \emph{#1}, #2 (eds.), #3, #4 (#5)}
\newcommand{\procm}[6]{in \emph{#1}, #2 (eds.), #3, #4, #5 (#6)}
\newcommand{\tia}[1]{\textit{#1},}
\newcommand{\boxd}[1]{\boxed{\phantom{\Biggl(}#1\phantom{\Biggl)}}}
\renewcommand{\leq}{\leqslant}
\renewcommand{\geq}{\geqslant}
\newcommand{\Eq}[1]{(\ref{#1})}
\newcommand{\Eqq}[1]{eq.~(\ref{#1})}
\newcommand{\Eqqs}[1]{eqs.~(\ref{#1})}
\def\rme{e}
\def\rmd{d}
\def\rmi{i}
\def\Re{\text{Re}}
\def\erf{{\rm erf}}
\def\a{\alpha}
\def\b{\beta}
\def\de{\delta}
\def\g{\gamma}
\def\la{\lambda}
\def\ve{\varepsilon}
\def\Om{\Omega}
\def\om{\omega}
\def\G{\Gamma}
\def\t{\tau}
\def\s{\sigma}
\def\vp{\varphi}
\def\B{\Box}
\def\H{{\rm H}}
\def\mst{M_*}
\def\Lst{\Lambda_*}
\def\Lz{\Lambda_*}
\def\Est{\Lambda_{\rm hd}}
\def\lst{\ell_*}
\def\Hes{{\bf\Delta}}
\def\cA{\mathcal{A}}
\def\cD{\mathcal{D}}
\def\cL{\mathcal{L}}
\def\cR{\mathcal{R}}
\def\p{\partial}
\def\cob{\color{blue}}
\newtheorem{theo}{Theorem}
\begin{document}
	
\renewcommand{\thefootnote}{\fnsymbol{footnote}}

\title{Early universe in quantum gravity} 

\author[a,b]{Leonardo Modesto\,\orcidlink{0000-0003-2783-8797}}
\emailAdd{leonardo.modesto@unica.it}
\affiliation[a]{Dipartimento di Fisica, Universit\`a di Cagliari, Cittadella Universitaria, 09042 Monserrato, Italy}
\affiliation[b]{Department of Physics, Southern University of Science and Technology, Shenzhen 518055, China}

\author[c,*]{and Gianluca Calcagni\,\orcidlink{0000-0003-2631-4588}\note{Corresponding author.}}
\emailAdd{g.calcagni@csic.es}
\affiliation[c]{Instituto de Estructura de la Materia, CSIC, Serrano 121, 28006 Madrid, Spain}


\abstract{We present a new picture of the early universe in finite nonlocal quantum gravity, which is Weyl invariant at the classical and quantum levels. The high-energy regime of the theory consists of two phases, a Weyl invariant trans-Planckian phase and a post-Planckian or Higgs phase described by an action quadratic in the Ricci tensor and where the cosmos evolves according to the standard radiation-dominated model. In the first phase, all the issues of the hot big bang such as the singularity, flatness, and horizon problems find a universal and simple non-inflationary solution by means of Weyl invariance, regardless of the microscopic details of the theory. In the second phase, once Weyl symmetry is spontaneously broken, primordial perturbations are generated around a background that asymptotically evolves as a radiation-dominated flat Friedmann--Lema\^{i}tre--Robertson--Walker universe. 
}

\keywords{Cosmological models, Models of Quantum Gravity}

\maketitle
\renewcommand{\thefootnote}{\arabic{footnote}}


\section{Introduction}

From 2011 to the present day, a new weakly nonlocal action principle has been proposed and extensively studied in order to overcome the renormalizability issue of the Einstein--Hilbert gravitational theory and the unitarity issue of Stelle gravity \cite{Modesto:2011kw,Biswas:2011ar,Modesto:2017sdr} (see \cite{BasiBeneito:2022wux,Buoninfante:2022ild,Koshelev:2023elc} for reviews). At the classical level, the dynamics is well defined and the Cauchy problem requires a finite number of initial conditions \cite{CMN3}. At the quantum level, the new theory was first introduced by Krasnikov in 1987 \cite{Kra87}, but the first version potentially consistent at the quantum level was due to Kuz'min in 1989 \cite{Kuzmin:1989sp}. Indeed, the exponential nonlocal operators employed in \cite{Kra87} are not compatible with power-counting renormalizability and an asymptotically polynomial nonlocality is needed instead \cite{Kuzmin:1989sp}, giving rise to a class of nonlocal quantum field theories which we may dub asymptotically local. Finiteness at the quantum level for any spacetime dimension was achieved in 2014, when the theory was completed with extra local operators at least cubic in the Ricci tensor \cite{Modesto:2014lga}. Tree-level unitarity was first pointed out in \cite{Kra87}, but proved at any perturbative order via Efimov analytic continuation in \cite{Pius:2016jsl,Briscese:2018oyx}. The latter is needed because Feynman diagrams cannot be computed directly in Lorentzian signature \cite{Briscese:2021mob,Calcagni:2024xku}.

Having at our disposal a candidate for an ultraviolet (UV) complete theory of quantum gravity, we now turn to its cosmology. In this paper, we begin to draw a picture of the early universe without introducing any inflationary era driven by an extra scalar degree of freedom. It deserves to be recalled that a UV completion of the Starobinsky inflationary model in nonlocal gravity has been proposed and extensively studied \cite{Briscese:2012ys,Briscese:2013lna,Koshelev:2016xqb,Koshelev:2017tvv,SravanKumar:2018dlo,Calcagni:2020tvw}. However, a careful look at the perturbative expansion of the dynamics \cite{SravanKumar:2018dlo} suggests a serious tension between renormalizability and stability, since the form factors (nonlocal operators) of that model were designed to guarantee stability on a de Sitter background, but not on others such as, for example, Starobinsky's quasi-de Sitter solution or generic Friedmann--Lema\^{i}tre--Robertson--Walker (FLRW) metrics.

We will see that, abandoning the efforts to embed inflation in nonlocal quantum gravity and just using the symmetries of the theory, the well-known problems of the classical hot-big-bang model, namely, the flatness, the horizon, the monopole, the trans-Planckian, and the singularity problems are elegantly and naturally solved by the Weyl invariance of the theory, regardless of the details of the theory beyond the Planck energy. Indeed, what really matters is not the specific model of quantum gravity, or the type of action principle, but the universality class that collects together all the finite theories at the quantum level. The absence of Weyl anomaly turns out to be an extremely powerful tool to solve the problems of classical cosmology mentioned above. When the symmetry is spontaneously broken and the dilaton takes an expectation value equal to the Planck mass, thermal fluctuations generate a primordial scalar spectrum consistent with the low-multipole cosmic microwave background (CMB) spectrum. As we discuss in a companion paper \cite{Calcagni:2022tuz}, the primordial scalar and tensor spectra are nearly but not exactly scale invariant thanks to logarithmic quantum-gravity corrections to the correlation functions. Once again, here Weyl symmetry is anomaly-free and, thus, it is manifest at the classical as well as at the quantum level, but spontaneously broken at the Planck scale. Therefore, such symmetry offers a natural solution to the above problems regardless of the details of the Lagrangian, and the physics of the cosmos we will discuss in this paper takes a universal character. In particular, the tensor spectrum turns out to have interesting properties which depend only very weakly on the details of the theory: it is blue-tilted and its amplitude is large enough to be detectable by upgrades of present-generation experiments.

Let us mention that Weyl invariance \cite{Wey21,Dirac:1973gk} has already been invoked in the literature as an alternative to inflation \cite{Antoniadis:1996dj,Antoniadis:2011ib,Amelino-Camelia:2013gna,Amelino-Camelia:2015dqa,Agrawal:2020xek}, as an agent acting together with inflation \cite{Wetterich:2019qzx}, in relation with the cosmological constant problem \cite{Wetterich:2019qzx,Wetterich:1987fm} or to cure cosmological \cite{narlikar:1977nf} and black-hole singularities \cite{Prester:2013fia}. These models all view conformal invariance as a fundamental extension of spacetime symmetries and the key to understand Nature at high energies and short scales, a task for which the Lorentz group alone might be insufficient \cite{tHooft:2015vaz}. They also hold that, since the world we observe is not scale invariant, a symmetry breaking mechanism must be implemented, either at early times or at short scales, or both. Attempts to elevate this mechanism from the status of an abstract assumption tend to vary considerably due to theoretical uncertainty. An example of explicit breaking is via a modified dispersion relation \cite{Amelino-Camelia:2015dqa}. In general, however, symmetry breaking is expected to be spontaneous, which can be achieved
when the dilaton \cite{Prester:2013fia,tHooft:2015vaz} or some other scalar \cite{Agrawal:2020xek,Witten:1988xi} acquires an expectation value. Since perturbations are gauge-dependent in the presence of Weyl symmetry, the freezing of the dilaton into an expectation value is tantamount to a gauge fixing \cite{Prester:2013fia,tHooft:2015vaz} and can happen with a trivially flat potential (see the beginning of section~\ref{higsphs}). A perhaps more compelling mechanism, adopted in \cite{Wetterich:2019qzx,Wetterich:1987fm} and in the present paper, is to introduce a conformally invariant interacting potential for the dilaton and another scalar field \cite{Wetterich:1987fm}, possibly the Higgs \cite{Bars:2006dy,Bars:2013yba}.
	
The above models are quite different from the one presented in this paper since, in our case, behind Weyl invariance we have a concrete quantum theory of gravity and of the other fundamental interactions, with a spontaneous symmetry breaking mechanism in place. In turn, this leads to sharper and unexplored predictions in the tensor sector \cite{Calcagni:2022tuz}.

The paper is organized as follows. In section~\ref{NLQGsec}, we review a simple version of nonlocal quantum gravity with matter and the role of finiteness in enforcing Weyl symmetry. We also introduce for the first time the details of an often-used form factor, here simplified further (section~\ref{sec2a}), and discuss the scales characterizing the theory (section~\ref{sec2b}). In section~\ref{Early}, we describe the Weyl trans-Planckian phase (section \ref{Trans-Planckian-conformal-phase}), the Higgs or post-Planckian phase and the one-loop quantum corrections characterizing the latter (section \ref{higsphs}) and the spacetime initial conditions in the trans- and post-Planckian phases (section \ref{incon}). In section~\ref{ProblemsCosmology}, we show how Weyl symmetry solves the problems of the standard hot-big-bang model in the trans-Planckian phase. Cosmological background solutions and their relation with the flatness and horizon problems are presented in section~\ref{sotns}. 
Conclusions are drawn in section~\ref{conclusions}, where future extensions of the results are also discussed. Several appendices contain technical material. 


\section{Asymptotically local quantum gravity}\label{NLQGsec}

A nonlocal gravitational theory coupled to matter has been introduced in \cite{Modesto:2021ief,Modesto:2021okr}, consistently with the following properties: (i) unitarity at any perturbative order in the loop expansion \cite{Modesto:2021ief}, (ii) same linear and nonlinear stability properties and same tree-level scattering amplitudes as for the Einstein--Hilbert theory \cite{Dona:2015tra,Modesto:2021soh}, (iii) same macro-causality properties \cite{GiMo}, and (iv) super-renormalizability or finiteness at the quantum level \cite{Calcagni:2023goc}.\footnote{Other nonlocal models for gauge theories or extending the Standard Model of particle physics were proposed in \cite{MoRa2} and \cite{Modesto:2015foa}, respectively.} Therefore, and although we are only beginning to tap its potential, the proposal in \cite{Modesto:2021ief} has the right requirements to be a unified theory of all fundamental interactions. 

We work in $(-,+,\cdots,+)$ signature and define the reduced Planck mass
\be\label{plama}
\Mpl^2\coloneqq\frac{1}{8\pi G}\,,
\ee
where $G$ is Newton's constant and the energy dimensionality of the Planck mass is $[\Mpl]=(D-2)/2$. 

The full action in $D$ topological dimensions reads
\ba
\hspace{-0.6cm}&& S[\Phi_i] = \int \rmd^D x \sqrt{|g|} \left[\cL_{\rm loc} + E_i \, F^{ij}(\Hes) \, E_j  + V(E_i)\right],\label{action} \\
\hspace{-0.6cm}&& S_{\rm loc} = \int \rmd^D x \sqrt{|g|} \,\cL_{\rm loc}\,,\label{localS} \\
\hspace{-0.6cm}&& \cL_{\rm loc} = \frac{\Mpl^{2}}{2}\,R+\cL_{\rm m}(\Phi_i)\,,\label{localL}\\
\hspace{-0.6cm}&& E_i \coloneqq \frac{\de S_{\rm loc}}{\de \Phi_i(x)}\,,\label{localEoM}\\
\hspace{-0.6cm}&& \Hes_{ki} \coloneqq \frac{\de E_i}{\de\Phi_k} = \frac{\de^2 S_{\rm loc}}{\de\Phi_k\de\Phi_i}\,,\label{Hessian}
\ea
where the ``potential'' $V(E_i)$, whose role will be described in section \ref{ficon}, is a collection of local operators at least cubic in the extremals $E_i$ (with energy dimensionality $[E_i]=D-[\Phi_i]$) and $\Phi_i$ (with canonical dimensionality $[\Phi_i]$) is any field in the theory, including the spacetime metric $g_{\mu\nu}$, the dilaton $\Phi$, fermions $\psi$ and gauge fields $A_\mu$: 
\be
\Phi_i \in \{ g_{\mu\nu}, \Phi, \psi, A_\mu,\,\dots \}\,. \label{fields}
\ee
The functional variations $\de/\de\Phi_i$ contain a weight factor $1/\sqrt{|g|}$, so that $\de\Phi_j(x)/\de\Phi_i(y)=\de_{ij}\de^D(x-y)/\sqrt{|g|}$. For the local Einstein--Hilbert Lagrangian \Eq{localL} with local matter Lagrangian $\cL_{\rm m}$, the local equations of motion for the metric are
\be\label{leom00}
E_{\mu\nu} = \frac12\left(\Mpl^{2} G_{\mu\nu} - T_{\mu\nu}\right) = 0\,,
\ee
where $T_{\mu\nu}\coloneqq-2\de S_{\rm m}/\de g^{\mu\nu}$ is the energy-momentum tensor. These are not the equations of motion of the nonlocal theory, as we will see shortly.

In order to make explicit the hidden Weyl symmetry of the theory \Eq{action}, we make the following replacements in the action:
\be
g_{\mu\nu}\eqqcolon \phi^2 \, \hat{g}_{\mu\nu} \,, \qquad \Phi = \frac{\hat{\Phi}}{{\phi} } \, , \qquad \psi = \frac{\hat{\psi}}{\phi^{\frac{3}{2}}} \,, \qquad A_\mu = \hat{A}_\mu \, .
\label{MatterConfInv}
\ee
All the fields $\hat{g}_{\mu\nu}$, $\hat{\Phi}$, $\hat{\psi}$, and $\hat{A}_\mu$ are rescaled by the dimensionless field $\phi$ in such a way that combined terms take an invariant form under Weyl transformations when $\phi$ is treated like a scalar,\footnote{We fix our terminology as in \cite{Shaposhnikov:2022dou}. A \emph{Weyl transformation} $g^\prime_{\mu\nu} = \Om^2(x)\,g_{\mu\nu}$ connects two generic curved backgrounds $g_{\mu\nu}$ and $g^\prime_{\mu\nu}$ by a non-constant conformal factor $\Om(x)$. A \emph{conformal transformation} is a special case of Weyl transformation where $g_{\mu\nu}=\eta_{\mu\nu}$. A \emph{scale transformation} is a special case of Weyl transformation where $\Om={\rm const}$. Weyl, conformal and scale invariance are defined accordingly.}
\be
\hat{g}^\prime_{\mu\nu} = \Om^2(x)\, \hat{g}_{\mu\nu} \, , \qquad \phi^\prime = \Om^{-1}(x)\, \phi \,,\qquad \dots\,. 
\label{WI}
\ee
Once the fields \Eq{MatterConfInv} are replaced in the action \Eq{action} and after the rescaling $\phi\to\phi/\Mpl^{(D-2)/2}$, so that the dilaton acquires dimensionality $[\phi] = (D-2)/2$, all the coupling constants of the theory become dimensionless. 

Finally, the form factor $F(\Hes)$ is defined in terms of an entire analytic function $\H(\Hes)$ of the Hessian operator $\Hes$,
\be
2 \Hes F(\Hes) \coloneqq \rme^{\bar\H(\Hes)} - 1 \,,\qquad \bar{\H}(\Hes) \coloneqq \H(\Hes)-\H(0)\,.\label{FF}
\ee
Note that $[F^{ij}]=-[\Hes_{ij}]=[\Phi_i]+[\Phi_j]-D$. In accordance with the standard treatment of nonlocal dynamical systems \cite{BasiBeneito:2022wux}, we assume $\H$ to be analytic (more precisely, holomorphic) and entire. If it is analytic, then it admits a series representation, which makes it easier to calculate the equations of motion. This assumption is also dictated by the fact that nonanalytic operators can be associated with infrared modifications of gravity (e.g., \cite{Barvinsky:2003kg,Deser:2007jk,Zhang:2016ykx,Belgacem:2020pdz} and references in \cite{Calcagni:2021hve}) that have little to do with a fundamental theory with new UV physics. On the other hand, the condition on $\H$ to be entire, invoked since the early days of nonlocal form factors in dynamical systems and quantum field theory \cite{PU,Efimov:1967dpd,Efimov:1967pjn} and quantum gravity \cite{Modesto:2011kw,Biswas:2011ar,Kra87,Kuzmin:1989sp,Tom97,Biswas:2005qr}, is to avoid modifications of the perturbative spectrum and, in particular, to introduce extra ghost degrees of freedom.

Thanks to \Eqq{FF}, the gravitational equations of motions of the nonlocal theory \Eq{action} can be written as
\be
\big[\rme^{\bar\H(\Hes)}\big]_{\mu\nu}^{\s\t} \, E_{\s\t} +O(E_{\mu\nu}^2)=0\,.\label{LEOM0}
\ee
Since, by construction, the form factor \Eq{FF} does not possess any poles or zeros in the whole complex plane at finite distance, any solution of the local theory \Eq{leom00} is also a solution of \Eqq{LEOM0}. The fact that we have more background solutions than Einstein's gravity does not imply that the theory is less predictive, since physical observables turn out to depend on very few free parameters \cite{Calcagni:2022tuz}.

The function $\H$ actually depends on the dimensionless quantity $z\propto\Hes/(\Mpl^2\mst^2)$, where $\mst$ is an energy scale introduced to compute the dimensionality of the operator $\Hes$.\footnote{The cosmologist should not confuse $z$ with the redshift (which we will never use in this paper) or $\H$ with the Hubble parameter $H$.} However, according to the discussion above, the theory \Eq{action} is Weyl invariant and the scale $\mst$ can be reabsorbed in the dilaton field $\phi$. With a slight but innocuous abuse of notation, we will freely interchange $\H(\Hes)$ with $\H(z)$ without further notice. The definition of $\H$ is
\ba
\hspace{-0.5cm} \H(z) = \int_0^{p(z)} \rmd w \, \frac{1 - \rme^{-w}}{w} = \g_\textsc{e} + \G[0, p(z)] + \ln p(z) \,,\label{H}
\ea
where $\g_\textsc{e}\approx 0.577$ is the Euler--Mascheroni constant, $\G$ is the upper incomplete gamma function, and $p(z)$ is a polynomial of degree $n+1$ in the variable $z$:
\be
p(z) = a_0 + a_1 z + a_2 z^2 + \dots + a_{n+1} z^{n+1}\,, \qquad a_i \in \mathbb{R}\,. \label{Poly}
\ee

The form factor \Eq{H} is actually a known special function, dubbed ${\rm Ein}(z)$ and called complementary exponential integral \cite[formula 6.2.3]{NIST}. The logarithmic divergence at $z=0$ is exactly canceled out by the divergence of $\G(0,z)$ and, indeed, $\H$ is finite and real on the whole real axis. Likewise, its exponential
\be
\rme^{\H(z)}= \rme^{\g_\textsc{e} + \G[0,p(z)]} \,  p(z)\label{MCFormFactor}
\ee
is well-defined and positive definite everywhere on the real axis (figure~\ref{fig1}). Since $\exp\H>0$, propagators do not change sign and there is no unitarity issue.
\begin{figure}
\bc
\hspace{-0.5cm}
\includegraphics[width=8.5cm]{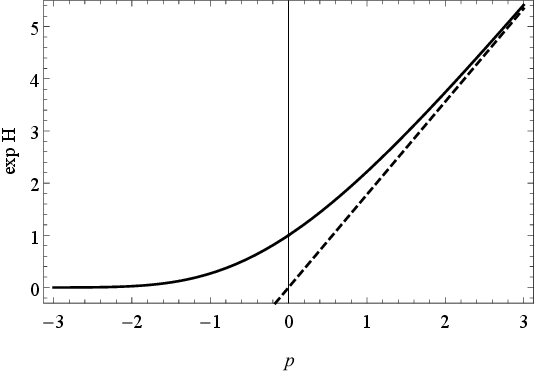}
\caption{Form factor \Eq{MCFormFactor} (solid curve) and its asymptotic limit \Eq{UVlimit}.\label{fig1}}
\ec
\end{figure}

The only constraint we place on the polynomial is that its real part be positive definite when $z\to\pm\infty$:
\be\label{unicacondi}
\Re\, p(z\to\pm\infty) > 0\,.
\ee
In this case, in the UV limit ($z\gg 1$)
\be
\rme^{\H(z)}\stackrel{\textsc{uv}}{\simeq} \rme^{\g_\textsc{e}} p(z)\,,\label{UVlimit}
\ee
which explains the name \emph{asymptotically polynomial} usually given to this class of form factors. On the other hand, in the infrared (IR; $z\ll 1$), the analyticity of the form factor provides an effective expansion of the action in higher-derivative operators. 

The form factor \Eq{H} includes those proposed by Kuz'min \cite{Kuzmin:1989sp}, Tomboulis \cite{Tom97}, and Modesto \cite{Modesto:2011kw}, but it is presented here in a simpler way, with fewer parameters and a more minimalistic constraint \Eq{unicacondi}.


\subsection{Choice of form factor}\label{sec2a}

For the sake of simplicity, we will work with a quartic polynomial:
\be
p(z) = a_0 + a_1 z + a_4 z^4  \, , \qquad a_0, \, a_1, \, a_4\in \mathbb{R} \,.
\label{PolySimple}
\ee
Other quartic polynomials are possible, e.g., $p(z)=(a_0 + a_1 z + a_2 z^2)^2$, but \Eqq{PolySimple} will suffice. This simple expression suits the minimal super-renormalizable (or finite) theory in $D=4$ dimensions, for which the polynomial has degree four. We removed the monomials $z^2$ and $z^3$ since they only affect the transient regime between the UV and the IR and, therefore, cannot change the results in this paper qualitatively.

In order to satisfy the condition \Eq{unicacondi}, $a_4>0$. We fix this coefficient to $a_4=1$ noting that the mass scale $\mst$ in $z$ can be redefined as $\Lst\coloneqq\mst/a_4^{1/4}$. For later convenience, we redefine $a_0\eqqcolon b_0$ and $b\coloneqq -a_1>0$, where the sign of $b$ is chosen on physical grounds that will become clear later. The polynomial \Eq{PolySimple} is rewritten as
\be\label{PolySimple2}
p(z) = b_0-b\,z+z^4\,.
\ee
A most important property we would like to obtain is to have an intermediate energy regime (still in the UV) where the linear term in \Eqq{PolySimple2} dominates over the other two. In the UV limit, this happens when
\ba
\frac{\rme^{\H(z)}}{\rme^{\H(0)}} &\stackrel{\textsc{uv}}{\simeq}& \rme^{\tilde{\g}_\textsc{e}}p(z)\stackrel{\textrm{\tiny \Eq{PolySimple2}}}{=} \rme^{\tilde{\g}_\textsc{e}}(b_0 - b\, z + z^4)\nn
&\simeq& \rme^{\tilde{\g}_\textsc{e}}(-b\, z)\,,\label{bnew}\\
\tilde{\g}_\textsc{e} &\coloneqq& \g_\textsc{e}-\H(0)\,,\label{tildeg}
\ea
which holds for $b_0 \ll b\,|z|$ and $b\,|z| \gg |z^4|$. For consistency, when $b_0<1$, the first inequality must be supplemented by the condition that we are still in the asymptotically polynomial UV regime \Eq{UVlimit}, i.e.,
\be\label{interint}
\frac{{\rm max}(1,b_0)}{b} \ll |z| \ll b^\frac{1}{3}.
\ee
This condition is satisfied, for instance, when $1<b_0\ll b$, or when $b_0=0$, $b\gtrsim O(1)$, and $z\gtrsim O(1)$. In fact, one can even relax the strong inequalities in \Eq{interint} and consider values of $z$ near the upper and lower bounds. In the worst case, $b_0=O(b z)=z^4$ and \Eqq{bnew} would acquire a factor of 3. However, this factor is not very important because, as seen in \cite{Calcagni:2022tuz}, $b$ should be large in order to satisfy CMB observations. In other words, any breaking of the approximation \Eq{bnew} would not lead to any inconsistency but it would alter the relation between the theoretical value and the experimental value of the free parameter $b$, without changing its order of magnitude. We comment on physical explanations of large values of $b$ in \cite{Calcagni:2022tuz}.

\subsection{Scales and parameters}\label{sec2b}

Let us now comment on the energy scale $\Lst$ in the polynomial. More generally, a sort of naturalness principle would require the theory to feature multiple scales through the coefficients in the polynomial \Eq{Poly}. Moreover, \emph{a priori} all the monomials in \Eqq{Poly} should be present. In order to make the above scale dependence explicit, we can rewrite \Eqq{Poly} in the form 
\ba
p(\B) = b_0 + b_1\frac{\B}{\Lambda^2_1}+ \dots +\left(b_n \, \frac{\B}{\Lambda^2_n} \right)^n \,, \qquad
\Lambda_1 \ll  \Lambda_2 \ll \dots  \ll \Lambda_n \,,\label{PolyMS}
\ea
where each power of the Laplace--Beltrami operator $\B$ contributes at a different scale and $b_i=O(1)$ are dimensionless positive or negative constants. As said above, the theory is Weyl invariant, so that all the above scales $\Lambda_i$ are proportional to the expectation value of the dilaton field multiplied by dimensionless constants related to the coefficients $a_i$ in \Eqq{Poly}, which can take any value in order to introduce a hierarchy of scales. In \Eqq{PolyMS}, the $\B$ operator can be replaced by the Hessian $\Hes$ without affecting the arguments presented here. 

However, in $D=4$ dimensions all the physical implications of the theory in the trans-Planckian as well as in the post-Planckian regime are captured by the simple polynomial \Eq{PolySimple2}, which we can rewrite as
\ban
p(\B) = b_0 +  b_1 \, \frac{\B}{\Lambda^2_1} + \left(b_4 \, \frac{\B}{\Lambda_4^2} \right)^{4}
\eqqcolon b_0 - b \, \frac{\B}{\Lambda^2_0}+\left(\frac{\B}{\Lst^2}\right)^{4}, 
\ean
where now $\Lambda_0$ and $\Lst$ can be of the same order of magnitude and $b$ can be large. 
 As shown in \cite{Calcagni:2022tuz}, in $D=4$ dimensions the quartic term is responsible for exact scale invariance (Harrison--Zel'dovich spectra), while the observed quasi-scale-invariant scalar spectrum requires the presence of the monomial linear in $\B$ and with negative coefficient.

We see that the theory \Eq{action} contains three scales: $\Mpl$, $\Lambda_0$, and $\Lst$. However, if $b \gg 1$ we can make the identification $\Lambda_0 = \Lst$ without any loss of generality. Indeed, $b$ simply defines the ratio between $\Lambda_0$ and $\Lst$:
\be
\boxd{p(\B) = b_0 - b \, \frac{\B}{\Lst^2}+\left(\frac{\B}{\Lst^2}\right)^{4}.}
\label{PolySimple20}
\ee
In the case of gravity, one should make the replacement $\B\to 4\Hes/\Mpl^2$:
\be
p(\Hes) = b_0 - b \, \frac{4\Hes}{\Mpl^2\Lst^2}+\left(\frac{4\Hes}{\Mpl^2\Lst^2}\right)^4,
\label{PolySimple3}
\ee
where $[\Hes]=4$. Therefore, the most general theory has two mass scales and one dimensionless free parameter:
\be
\Mpl \,, \quad \Lst\,, \quad b\,.
\label{3scales}
\ee
We discuss the values of $\Lst$ and $b$ in \cite{Calcagni:2022tuz} but, here, we leave them unspecified.

We can gain further understanding of the role of each term in the action in $D>4$ dimensions. After the spontaneous symmetry breaking of Weyl invariance, the effective theory amounts to a collection of higher-derivative operators characterizing the following extra-dimensional renormalizable generalization of Stelle gravity \cite{Ste77,Ste78}:
\ba
\hspace{-0.2cm}
 \cL \sim 
R + \frac{\mathcal{R}^2}{\Lambda_0^2}  +  \mathcal{R} \frac{\B}{\Lambda_2^4} \mathcal{R} + \dots + \mathcal{R} \frac{\B^{ \frac{D-4}{2} }}{(\Lambda_{D-3})^{D-4}}  \mathcal{R} 
\,,\label{MultiStelle}
\ea 
where $\mathcal{R}$ stands for $R, R_{\mu\nu}, R_{\mu\nu\rho\s}$, while the ellipsis includes $O(\mathcal{R}^3)$ operators which appear when $D>4$. When $D=4$, only the first two terms contribute with, respectively, two and four derivatives, as in \Eqqs{PolySimple20} and \Eq{PolySimple3}. Therefore, the observed quasi-scale-invariant spectrum can be traced back to the renormalizable local theory hidden inside the finite theory \Eq{action}. It deserves to be noticed that, out of all the operators in \Eqq{MultiStelle}, only the one of dimension $D$ matters for reproducing a scale-invariant spectrum, which essentially stems from a propagator $\sim 1/k^D$. Operators of lower dimensionality are relevant in the IR regime in order to reproduce Newton's potential, while operators of higher dimensionality are crucial for the finiteness of the theory and Weyl invariance. A small deviation from exact scale invariance will come from quantum logarithmic corrections.

We conclude this subsection by noting that the theory is defined by a finite number of parameters. This is one of the reasons why it is under control and predictive both at the classical and the quantum level despite having an infinite number of derivatives.


\subsection{Finiteness and Weyl invariance}\label{ficon}

The potential $V(E_{\mu\nu})$ in the action \Eq{action} is needed to guarantee the finiteness of the theory in even dimensions where the presence of the nonlocal form factor is sufficient to make any loop amplitude finite except at the one-loop level. On the other hand, in odd dimensions the theory has no one-loop divergences as a particular feature of the dimensional regularization scheme. All the details on renormalization and finiteness are provided in \cite{Modesto:2014lga} for the vacuum theory and in \cite{Calcagni:2023goc} with matter fields, while perturbative unitarity is treated in \cite{Briscese:2018oyx}. Notice that, due to the presence of the dilaton $\phi$, the theory does not contain any mass scale and, hence, \Eqq{action} is Weyl invariant classically \cite{Modesto:2021ief}. Moreover, since the theory is finite at the quantum level, the beta functions vanish and Weyl symmetry is anomaly free. 

Let us recall that, if a classical field theory is conformally invariant, then this symmetry is broken by quantum corrections if the beta functions of the theory are nonvanishing, while it is preserved is all beta functions are zero. This is easy to see in dimensional regularization at the level of gravity-dilaton operators \cite[section 6.5]{Calcagni:2023goc} but a more elegant way to express the breaking is to calculate the trace $T_\mu^{\ \mu}$ of the energy-momentum tensor. If this trace is nonzero at some loop order, then conformal invariance is broken and we are in the presence of a conformal or Weyl anomaly \cite{Capper:1974ic,Capper:1975ig,Deser:1976yx,Duff:1977ay,Alvarez-Gaume:1983ihn,Antoniadis:1984kd,Deser:1993yx,Duff:1993wm,Mazur:2001aa,Asorey:2003uf,Mottola:2006ew,Boulanger:2007ab,Komargodski:2011vj,Codello:2012sn,Mottola:2016mpl}. The trace anomaly can be calculated from the general counterterms of any theory with infinities. A widely used example is Einstein gravity plus gauge and fermion fields, but it has also been considered in supersymmetric quantum field theory and supergravity \cite{Alvarez-Gaume:1983ihn,Sohnius:1981sn,Fradkin:1983tg,Fradkin:1985am}. The knowledge of the beta functions fixes $T_\mu^{\ \mu}$ completely and \emph{vice versa} \cite{Duff:1977ay}. In effective field theory, one does not have such knowledge because the underlying theory is assumed to be unknown and the beta functions are taken to be generic coefficients.

In contrast, in a finite theory there are no infinities and no counterterms, all beta functions are zero by calculation, and there is no conformal anomaly. This property has been explored in conformal cosmological models as a proof of concept, i.e., without specifying a concrete finite quantum theory behind \cite{Wetterich:2019qzx,Wetterich:1987fm}. Here we advocate the finite version of asymptotically local quantum gravity as an explicit framework in which the vanishing of the beta functions and of the conformal anomaly occurs \cite{Calcagni:2023goc}. At low energies with respect to the scale $\Lst$, the asymptotically polynomial form factors reduce to powers of derivatives, the theory looks like a higher-derivative model, and nonlocal operators become like higher-derivative regulators \cite{Asorey:1996hz}. Therefore, in this limit, all results at any loop order must be in agreement with calculations in the corresponding local theory in the limit $E/\Lambda_*\to 0$, in all sectors. The quantum effective action will have quantum corrections to the local theory, as we will see below. Therefore, by construction, in the low-energy limit of the theory one recovers both the classical and the quantum Standard Model together with its experimentally verified predictions. In particular, the vanishing of the beta functions in the deep UV of the finite version of the theory does not contradict the effective presence of beta functions at the electrodynamics scale, as found, e.g., in precision measurements of the running of the effective fine-structure constant \cite{L3:2000hbp}. At much higher energies, higher-derivative corrections suppressed by powers of $\Lst$ would modify these quantum corrections, while at yet higher energies $O(\Lst)$ nonlocality would ideally make its appearance in accelerators.

At the quantum level, one integrates both the dilaton $\phi$ and the metric $\hat g_{\mu\nu}$ in the path integral. The functional integration on the dilaton gives $Z_0=\int [\cD \phi] [\cD \hat g_{\mu\nu}]\,\exp(\rmi S[\hat g_{\mu\nu}, \phi]) = \int[\cD \hat g_{\mu\nu}] \exp(\rmi S[\hat g_{\mu\nu}, \Lst])$, which is equivalent to perform the gauge fixing $\phi=1$. This makes the energy scale $\Lst$ appearing in the form factors a meaningful reference to define the UV/IR divide $|k^0|=\Lst$. In the Weyl phase, this divide exists only in the mathematical sense specified above and renormalizability or finiteness of a conformal theory are a UV property of the theory only with this meaning, with no reference to physical short scales. When Weyl symmetry is broken or when one considers an effective field theory, the gauge fixing becomes physical because the dilaton takes an expectation value and $\Lst$ becomes a physical scale. At that point, one can appreciate, for instance, that effects of the conformal anomaly are not confined to the UV but can carry through to long distances and leave an imprint in the IR, essentially mediated by light or massless particles \cite{Mottola:2016mpl}.


\section{Gravity at high energy}\label{Early}

It is natural to assume that the description of a Universe governed by any fundamental quantum theory of gravitation should begin at the highest energies with all symmetries unbroken. In the present case, at high energies the theory is Weyl invariant. Then, the evolution of the fields, \emph{in primis} the dilaton, leads to a spontaneous breaking of this symmetry around the Planck scale, a moment marked by the gradual recovery of standard Einstein's gravity. Following this chain of events turns out to provide a viable alternative to inflation.

According to this scenario, the physics of the early Universe consists of a Weyl-invariant trans-Planckian phase and a Higgs Planckian phase in which Weyl symmetry is spontaneously broken. The transition between the two phases is driven by the dilaton field $\phi$ in \Eqq{MatterConfInv}, which we can rewrite as
\be
\phi \equiv \frac{\Mpl}{\sqrt{2}} + \varphi \,,
\label{MPphi}
\ee
where $\varphi$ is the Goldstone boson associated with the spontaneous symmetry breaking of Weyl invariance. For $\vp \gg \Mpl$, $\phi\simeq\varphi$ and the action is Weyl invariant under the transformation \Eq{WI}, while for $\vp \ll \Mpl$, $\phi \simeq \Mpl/\sqrt{2}$ and Weyl symmetry is spontaneously broken by the vacuum solution. Since the Goldstone boson $\vp$ is unobservable, the inequalities $\vp \gg \Mpl$ and $\vp \ll \Mpl$ do not correspond to any regime that could be probed by an ideal Planck-energy detector. Later on, we will characterize the Weyl and the Higgs phase with physical inequalities.


\subsection{Trans-Planckian Weyl phase}\label{Trans-Planckian-conformal-phase}

The trans-Planckian regime is described by a Weyl invariant phase. The theory is manifestly Weyl invariant, the dilaton $\phi$ has an exactly constant potential $U(\phi)=0$, and the correlation functions of all fields are identically zero as a consequence of Weyl symmetry. Indeed, consider an arbitrary set of operators $O_i(x)$ that transform under Weyl symmetry with conformal weight $\Delta_i$, 
\be\label{OmegaDelta}
O_i'(x)=\Om^{\Delta_i}(x)\,O_i(x)\,.
\ee
The vacuum Lorentzian path integral \cite{Calcagni:2024xku} $$Z_0\coloneqq\int [\cD O]\,\exp(\rmi S[O])$$ is invariant under a Weyl transformation, since both the action $S[O_1,O_2,\dots]$ and the functional measure $[\cD O]=\prod_i [\cD O_i]$ are invariant by construction if the quantum symmetry is preserved (no conformal anomaly): $S[O']=S[O]$, $[\cD O']=[\cD O]$. Define the $n$-point correlation function as
\ba
\langle O_1(x_1)  \dots  O_n(x_n) \rangle \coloneqq \frac{1}{Z_0}\int [\cD O]\,O_1(x_1)  \dots  O_n(x_n)\,\rme^{\rmi S[O]},
\ea
so that $Z_0=\langle1\rangle$. Renaming first the dummy integration variables $O_i\to O_i'$, then performing a change of variables via the conformal transformation \Eq{OmegaDelta}, and finally using the property of Weyl invariance for the integration measure and the action, one has \cite[section~2.4.3]{DiFrancesco:1997nk}
\ba
\langle O_1(x_1) \dots O_n(x_n) \rangle
&\stackrel{\textrm{\tiny $O_i\to O_i'$}}{=}& \frac{1}{Z_0}\int [\cD O']\,O_1'(x_1)  \dots  O_n'(x_n)\,\rme^{\rmi S[O']}\nn
&\stackrel{\textrm{\tiny $O_i'=\Om^{\Delta_i}O_i$}}{=}& \frac{1}{Z_0}\int [\cD O']\,O_1'(x_1)  \dots  O_n'(x_n)\,\rme^{\rmi S[O']}\nn
&=&\frac{1}{Z_0}\int [\cD O]\,O_1'(x_1)  \dots  O_n'(x_n)\,\rme^{\rmi S[O]}\nn
&=& \langle O_1'(x_1)  \dots  O_n'(x_n) \rangle\nn
&=& \Om^{\Delta_1}(x_1) \dots \Om^{\Delta_n}(x_n) \langle O_1(x_1) \dots O_n(x_n) \rangle\,,\label{eq31}
\ea
where in the last line we used the fact that $\Om(x)\neq \Om[O(x)]$ is a function of the coordinates but not of the fields.
Since $\Om$ is not a constant, \Eqq{eq31} implies
\be
\langle O_1(x_1) \dots O_n(x_n) \rangle = 0\quad \mbox{for} \quad \Delta_1 \neq 0 \, , \dots, \, \Delta_n \neq 0 \,.
\ee
According to \Eqq{WI}, for the metric and the graviton $\Delta_{\bm{\hat g}, \bm{\hat h}}=2$, while for the dilaton $\Delta_\phi = -1$. Here, boldface symbols stand for rank-2 tensors. On the other hand, correlation functions involving only fields with $\Delta_i=0$ (for example, photons or gauge bosons) can be nonzero. 

Notice that quantum corrections do not affect the propagator in the Weyl phase because the logarithmic operators in \Eq{LQ} are of the form
\be
\ln(-{\rm f}^{-2}\B_{\bm g})\big|_{\bm{g}=\phi^2 \bm{\hat g}} \,,\qquad {\rm f} =\frac{\sqrt{2}\Lst}{\Mpl}\,,
\ee
and they appear together with the dilaton in the Weyl phase. Therefore, operators like 
\ben
\cR({\bm g}) \ln (- {\rm f}^{-2} \B_{\bm g})\, \cR({\bm g})\Big|_{\bm{g}=\phi^2 \bm{\hat g}} 
\een
contribute to three-point functions (three gravitons or two gravitons and a dilaton) but are not involved in perturbative two-point functions. Only when the Weyl symmetry is broken spontaneously in the Higgs phase do quantum corrections quadratic in the curvature tensors take part in the graviton two-point function (see next subsection).

How high can energies be in this phase? As explained in \Eqq{3scales}, in this theory there are in total three scales: $\Mpl$, $\Lst/\sqrt{b}$, and $\Lst$. $\Lst$ is a free scale. $\Mpl$ is another scale independent of $\Lst$ and related to the expectation value of the dilaton field $\langle\phi\rangle=\Mpl/\sqrt{2}$, which is uniquely fixed after Weyl symmetry is spontaneously broken and is given by the measured Newton's constant $G=(8\pi\Mpl^2)^{-1}$ when reading it off from the nonminimally coupled action, obtained from \Eqq{localL} after the transformation \Eq{MatterConfInv} ($\Mpl^2 R/2\leftrightarrow\phi^2 R$). However, in the Weyl phase 
no scale larger than the Planck mass can be probed because, in order to observe such scale, we would need a test particle (or a classical perturbation) having a wavelength $\la$ shorter than the scale at which the Weyl symmetry is spontaneously broken, in this case $\lp\coloneqq\Mpl^{-1}$. However, in the Weyl phase, when $\la<\lp$ we can always make it larger by means of a Weyl rescaling. In other words, any wavelength shorter than the Planck length can be stretched to the Planck length or above with a Weyl transformation. This argument is qualitative and does not fix an absolute upper bound on the energy. Nevertheless, taking as a reference the expectation value of $\phi$ at the scale of symmetry breaking, we see that this absolute upper bound should be of order of $\Mpl$:
\be\label{LambdaMinMP}
\Lst \leq \Mpl\,,\qquad \lst\geq\lp\,,
\ee
where $\lst\coloneqq \Lst^{-1}$. A weaker relation outside the Weyl phase can be obtained consistently with perturbative quantum field theory \cite{Calcagni:2022tuz}. In this paper, we will remain as general as possible and fix $\Lst$ at or near the Planck mass only in \cite{Calcagni:2022tuz}.


\subsection{Post-Planckian Higgs phase}\label{higsphs}

The post-Planckian phase takes place at energy scales below the Planck mass $\Mpl$, when the magnitude of the Goldstone boson is much smaller than $\Mpl$. In this phase, the classical action is obtained simply replacing \Eqq{MPphi} into \Eq{action} for $\vp \ll \Mpl$ or, equivalently, imposing the unitary gauge to fix $\vp=0$.

In standard quantum field theory, whenever we choose an exact solution we always have to check its stability against perturbations. Usually, in the presence of, say, gauge symmetries, a potential with local minima is required to ensure the stability of a given vacuum choice, lest quantum fluctuations displace the field from there. This setting is unavoidable when field perturbations are physical or, in other words, gauge invariant. However, in a Weyl-invariant theory the dilaton is pure gauge: It can always be fixed to a value $\phi={\rm const}$ (where ${\rm const}=\Mpl$ acquires physical meaning only after breaking the symmetry) and perturbations $\de\phi=\phi-\Mpl$ are gauge-dependent. Therefore, we neither need a potential nor to expand around a minimum. Also, the perturbation $h_{\mu\nu}$ around a background with $\phi=\Mpl+\de\phi$ is conformally equivalent to a perturbation $h_{\mu\nu}$ on a background with $\phi=\Mpl$ \cite{Percacci:2011uf}. Thus, if one perturbs a $\phi={\rm const}$ and $g_{\mu\nu}=g_{\mu\nu}^{(0)}$ background such that $R_{\mu\nu}[g_{\mu\nu}^{(0)}]=0$, this vacuum turns out to be stable against spin-2 perturbations, while there are no spin-0 perturbations because they are gauge-dependent.

This peculiarity of Weyl-invariant theories makes the mechanism of spontaneous symmetry breaking different with respect to the usual Higgs mechanism valid in other theories. In particular, spontaneous symmetry breaking of Weyl invariance happens when the dilaton picks a given expectation value (or, in other words, becomes purely classical), even if its potential is zero \cite{tHooft:2015vaz,Antoniadis:1984kd,Fradkin:1985am,Englert:1976ep,tHooft:2011aa,Ilg12,Bars:2013yba}. Our world is built on a vacuum with a certain value among all the others, a state not invariant under Weyl symmetry. In this scenario, the dilaton potential is flat and there are an infinite number of vacua. Symmetry breaking occurs when one such vacuum is selected and the field takes the expectation value $\langle\phi\rangle=\Mpl/\sqrt{2}$. This specific value, which has no special meaning during the Weyl phase, is determined \emph{a posteriori} by experiments.\footnote{Sometimes it has been claimed that setting $\phi=\phi_0={\rm const}$ in conformal gravity might not correspond to a spontaneous breaking of Weyl symmetry because one can perform a conformal transformation $\Om=(\phi/\phi_0)^w$ (where $w$ is a constant) such that the Lagrangian with $\phi=\phi_0$ is transformed back to a conformally invariant Lagrangian \cite{Lasenby:2015dba,Hobson:2022ahe}. According to this reasoning, $\phi=\phi_0$ would be just a gauge choice not affecting the symmetry of the system in any meaningful way. Although this is mathematically correct, it misses the elementary particle-physics point that, by definition, spontaneous symmetry breaking entails a vacuum choice after which no symmetry transformation is possible. A classic example is the choice of vacuum in the ${\rm U}(1)$-invariant set of vacua of the Mexican-hat potential for the Higgs field $\phi=\Phi\exp(\rmi\theta)$, corresponding to a choice of phase $\theta$. More precisely, the total Lagrangian does retain the symmetry but such symmetry is broken for perturbations around the vacuum. The argument of \cite{Lasenby:2015dba,Hobson:2022ahe} does not take these perturbations $\vp$ of the dilaton into account.}

Assuming without further details that the dilaton or another scalar field \cite{Witten:1988xi} picks a given expectation value may be somewhat disturbing. For this reason, we can outline other ways to break Weyl symmetry as follows.

The closest mechanism to the usual Higgs one relies on a dilaton potential with absolute minima corresponding to dynamically preferred states. This potential should be built into the action \Eq{action}, in such a way as to respect Weyl invariance and renormalizability properties and to admit static nonzero minima for the dilaton:
\be\label{minima}
\Phi_i^{\rm min}\in\{\phi={\rm const}\neq 0,\,\hat g_{\mu\nu}(x),\,\hat\Phi_i^{\rm SM}(x),\,\dots\}\,,
\ee
where $\hat g_{\mu\nu}$ is the metric from which one extracts the graviton. $\hat\Phi_i^{\rm SM}(x)$ are the matter fields of the Standard Model, and $\dots$ are other matter fields, which we ignore from now on. The simplest option is to have a nontrivial dilaton potential from the minimalistic construction of section \ref{NLQGsec}, where the dilaton arises from the field redefinition \Eq{MatterConfInv}, $g_{\mu\nu}=\phi^2\hat g_{\mu\nu}$. In four dimensions, one has a nonlocal Lagrangian of the form 
\ba
\sqrt{|g|}\cL[\phi,g_{\mu\nu},\Phi_i^{\rm SM}]&=&\sqrt{|\hat g|}\phi^4\cL[\phi^2\hat g_{\mu\nu},\phi^{-\Delta_i}\hat\Phi_i^{\rm SM}]\nn
&=&\sqrt{|\hat g|}\hat\cL\,,
\ea
where $\Delta_i$ is the conformal weight of matter fields and $\cL$ is obtained from the generating local Lagrangian \Eq{localL}:
\be\label{localL2}
\cL_{\rm loc}=\cL_{\rm loc}[\phi^2\hat g_{\mu\nu},\phi^{-\Delta_i}\hat\Phi_i^{\rm SM}]\,.
\ee
Since $\sqrt{|g|}=\sqrt{|\hat g|}\phi^4$, a cosmological constant term $\Lambda_{\rm cc}$ in the generating local Lagrangian \Eq{localL} would immediately yield a quartic potential $\phi^4$, which would be a good starting point were it not for the presence of other terms where $\phi$ is nonminimally coupled with the metric and matter fields. The task would be to check that there is a minimum such as \Eq{minima} where the metric and all the fields of the Standard Model remain dynamical.

Another possibility is to modify the minimalistic approach  of section \ref{NLQGsec} and construct a nonlocal action \Eq{action} directly with an explicit dilaton dependence, which can be achieved by adding Weyl-invariant extra terms $\cL^{\rm loc}_2$ in the local Lagrangian \Eq{localL2}:
\be
\cL_{\rm loc}=\cL^{\rm loc}_1[\phi^2\hat g_{\mu\nu},\phi^{-\Delta_i}\hat\Phi_i^{\rm SM}]+\cL^{\rm loc}_2[\phi,\hat g_{\mu\nu},\hat\Phi_i^{\rm SM}]\,.
\ee
Some proposals in the literature for an effective dilaton potential might prove useful also in this theory, after adaptations \cite{Bars:2006dy,Bars:2013yba,Schwimmer:2010za,Kubo:2022jwu}. Take for instance the Higgs-dilaton potential \cite{Bars:2006dy,Bars:2013yba}
\be\label{hidil}
\cL^{\rm loc}_2[\phi,\mathfrak{h}]=\la (\mathfrak{h}^\dagger\mathfrak{h}-\a\phi^2)^2+\la'\phi^4\,,
\ee
where $\mathfrak{h}$ is the Higgs field and $\la$, $\la'$, and $\a$ are dimensionless, positive constant couplings. The term $\sqrt{|\hat g|}\cL^{\rm loc}_2$ is Weyl invariant and has the attractive property that its minima are the locus $\mathfrak{h}^\dagger\mathfrak{h}=\a\phi^2$. In particular, $|\mathfrak{h}|={\rm const}$ when $\phi={\rm const}$, so that, when the vacuum expectation value of the Higgs gives the observed masses of the gauge bosons of the Standard Model, then the dilaton remains constant in order to stay in the trough, and \emph{vice versa}. Again, an analysis of the equations of motion is required to have a robust picture of this mechanism, but a great advantage is that the solutions of the equations of motions of the generating local system are solutions of the nonlocal theory, with the same stability properties. Therefore, the dynamical analysis of \cite{Bars:2006dy,Bars:2013yba} holds. Of course, there may be other solutions in the nonlocal case.

The Higgs-dilaton symmetry breaking mechanism based on \Eq{hidil} \cite{Bars:2006dy,Bars:2013yba} has the advantage of giving physical meaning to the inequalities $\vp \gg \Mpl$ and $\vp \ll \Mpl$ defining, respectively, the trans-Planckian and the post-Planckian phase. In fact, in the trans-Planckian phase and near the minimum trough of the potential \Eq{hidil} one has
\be
\mathfrak{h}^\dagger\mathfrak{h}=\a \phi^2 \sim\a \vp^2\gg \a\Mpl^2\qquad\Longrightarrow\qquad \sqrt{ \mathfrak{h}^\dagger\mathfrak{h}}\gg \sqrt{\a}\Mpl\,,
\ee
while in the post-Planckian phase $\vp \ll \Mpl$ we have
\be
\mathfrak{h}^\dagger\mathfrak{h}=\a\phi^2 \simeq \frac{\a}{2}\Mpl^2\qquad\Longrightarrow\qquad \sqrt{ \mathfrak{h}^\dagger\mathfrak{h}}\simeq \sqrt{\frac{\a}{2}}\Mpl\,.
\ee
These relations now involve a physical particle rather than a Goldstone boson.


To get the dynamics with quantum corrections, we work in perturbation theory. Nonlocal quantum gravity is mainly studied as a perturbative field theory where the graviton is a perturbation around a given background. This procedure, common in metric perturbative approaches to quantum gravity, is to be considered legitimate at all scales (even around the Planck scale) as long as the perturbative expansion is self-consistent and quantum corrections are kept subdominant order by order. The cosmological model derived in this paper and in \cite{Calcagni:2022tuz} respects this constraint by construction.

According to the results of section \ref{sec2a}, we can ignore the constant term in the form factor when the energy $E\coloneqq k^0\gtrsim \Est$, where 
\be\label{Estar}
\Est\coloneqq   \sqrt{\frac{{\rm max}(1,b_0)}{b}}\,\Lst\,.
\ee
At this scale, the action \Eq{action} reduces to a particular local theory quadratic in the curvature which, in vacuum, is Stelle gravity with a special choice of coefficients (appendix \ref{appeA1}). The subscript ``hd'' is a reminder that we are working in this higher-derivative limit. Then, one can show that, in the intermediate UV regime
\be\label{interm}
\Est\lesssim E\lesssim\Lst\,,
\ee
(which comes from \Eqq{interint} where the left inequality is weak for the reasons argued above, while the right inequality is weak because $b\gtrsim O(1)$) the nonlocal Lagrangian in \Eqq{action} acquires logarithmic quantum corrections at one loop (appendix \ref{appeA2}):
\bs\label{QEA}\ba 
\cL_{\Est} &=& \frac{\Mpl^2}{2} R + \cL_{\rm m} + E_{\mu\nu} E^{\mu\nu}  + V(E_{\mu\nu}) + \cL_{Q} \,,\\
\cL_{Q} &=& \b_R R \ln \left(-\frac{\B}{\Lst^2\de_0^2}\right) R + 
\b_{\rm Ric} R_{\mu\nu}  \ln \left(-\frac{\B}{\Lst^2}\right) R^{\mu\nu},\label{LQ}
\ea\es
which are exactly the same quantum corrections of Stelle's theory upon the identification of 
the cut-off scale $\Lambda_\textsc{uv}$ of Stelle quantum gravity with the fundamental scale of the nonlocal theory, $\Lambda_\textsc{uv} \propto\Lst$. Indeed, the higher-derivative operators (depending both on the potential $V(E_{\mu\nu})$ and on the leading terms in the polynomial $p(z)$) can be seen as regulators for quadratic gravity, and, in the limit $E \ll \Lst$, the larger nonlocality scale $\Lst$ turns out to be proportional to the usual quantum-field-theory cut-off $\Lambda_\textsc{uv}$. Here we set $\Lambda_\textsc{uv}=\Lst\de_0$ for the $R$-$R$ term and $\Lambda_\textsc{uv}=\Lst$ for the $R_{\mu\nu}$-$R_{\mu\nu}$ term, where $\de_0$ is a dimensionless constant that should be fixed comparing the renormalization-group-invariant scales of Stelle's theory with the nonlocal scale $\Lst$. We will not need to perform this explicit calculation to get our main results. Note that the logarithmic nonanalyticity of the quantum corrections is exclusively due to the Landau singularities of the one-loop amplitude, which are the same of the local theory.

Finally, the coefficients $\b_R$ and $\b_{\rm Ric}$ are nothing but the finite numerical constants coming in front of the quantum corrections. \emph{They are not beta functions}, since the quantum nonlocal theory is divergence-free and there are no counterterms.

For the theory \Eq{action}, the term \Eq{LQ} is correct only in the intermediate regime \Eq{interm}, while at higher energies $E \gtrsim \Lst$ quantum corrections differ substantially from Stelle's theory. Some scalar-field examples are given in \cite{Smailagic:2004yy,Koshelev:2021orf,Jax}. In particular, at energy scales larger than any scale present in the theory, preliminary results show that it should be possible to choose $V(E_{\mu\nu})$ in order to have amplitudes that fall off safely to zero in the UV after resummation \cite{Jax}. The other option is a strongly interacting theory in the UV but, since in the deep UV regime we are actually in the Weyl phase, high-energy physics will be affected only marginally by the details of the potential $V(E_{\mu\nu})$. 

As mentioned in section~\ref{Trans-Planckian-conformal-phase}, the logarithmic quantum corrections contribute to the two-point correlation function of the graviton when the Weyl symmetry is spontaneously broken, i.e., for 
$\varphi\lesssim \Mpl$. Indeed, expanding the quantum log corrections for small $\varphi/\Mpl$ we get
\ba
\ln \left(- {\rm f}^{-2} \B_{\phi^2 \bm{\hat g}} \right) &=& \ln \left[-{\rm f}^{-2}\B_{(\Mpl/\sqrt{2} + \varphi)^2 \bm{\hat g}} \right]\nn
&=& \ln \left(\frac{-2\B_{\bm{\hat g}}}{{\rm f}^2 \Mpl^2} \right) \left[1 + O\left(\frac{\varphi}{{\rm f}\Mpl}\right)\right] \nn
&=& \ln \left(\frac{-\B_{\bm{\hat g}}}{\Lst^2} \right) + O\left(\frac{\varphi}{\Lst}\right), \label{logSSB}
\ea
which coincides with the log operator in \Eqq{LQ} up to terms $O(\varphi/\Lst)$ that do not enter the two-point function of the graviton. In \Eqq{logSSB}, the first contribution, independent of the Goldstone boson, gives rise to the quasi-scale invariant CMB scalar spectrum \cite{Calcagni:2022tuz}, thus providing an authentic quantum origin of primordial perturbations.

Even if we are below the Planck energy in this regime, we are still fairly close to it. Perturbation theory remains valid as long as quantum corrections remain subdominant with respect to tree-level contributions, as imposed in \cite{Calcagni:2022tuz}. On top of that, the theory is asymptotically free \cite{Avramidi:1985ki,Modesto:2014lga,Briscese:2019twl}, which is a further guarantee that at ultrashort scales the perturbative expansion is well defined.


\subsection{Initial conditions in the trans- and post-Planckian phases}\label{incon}

Having described the main characteristics of the Weyl and the Higgs phase, let us discuss their spacetime metrics.

In the trans-Planckian regime, all conformally equivalent background solutions are physically equivalent to one another, and timelike and spacelike distances do not have any physical meaning. There is no physical singularity in the past, as discussed in section \ref{Singularity Problem}. 

After Weyl symmetry is broken, one of the conformally equivalent backgrounds becomes physical. The cosmological picture stemming from a theory claimed to be fundamental should be able to fully specify and justify theoretically the initial conditions of the Higgs phase, which in turn depend on what class of conformally equivalent metrics is picked during the trans-Planckian phase. Here we give a partial answer to this theoretical problem. The logical steps (A)-(B)-(C) will be the following:

\bigskip
\hspace{-.9cm}\begin{tabular}{cclclcl}
\multirow{3}{1.5cm}{Weyl invariance} & \multirow{3}{*}{$\Longrightarrow$\!\!\!\!} & (A) homogeneity $\longrightarrow$ Bianchi I,V,IX\!\!\!\! & \multirow{2}{.8cm}{\!\!$\left.\rule{0cm}{.7cm}\right\}\!\!\!\!\longrightarrow$} & \!\!\!\!Minkowski & \multirow{3}{.1cm}{\!\!\!\!\!\!\!$\left.\rule{0cm}{.9cm}\right\}\!\!\!\!\longrightarrow$}& \multirow{3}{.8cm}{$\begin{matrix}\textrm{Minkowski} \\ \textrm{FLRW}_{\textsc{k}=0}\end{matrix}$}\\
& & (B) isotropy & & \!\!\!\!$\textrm{FLRW}_{\textsc{k}=0,\pm1}$ & &\\
& & (C) $\textrm{FLRW}_{\textsc{k}=\pm1}\equiv\textrm{FLRW}_{\textsc{k}=0}$\!\!\!\! & & & &
\end{tabular}
\bigskip

\noindent {\bf (A)} In the trans-Planckian phase, Weyl invariance and the resulting absence of physical meaning to distances imply that no horizon can be defined to circumscribe regions in causal contact internally. Since there is neither an horizon nor a beginning at some finite time in the past, all regions of the universe have already been in causal contact at the onset of the Higgs phase and had infinite time to reach thermal equilibrium. This implies \emph{homogeneity} of the matter sector and one enters the Higgs phase with a metric compatible with this matter content. Asymptotically, this metric must be one of the Bianchi backgrounds \cite{LaL2,Hsu:1986ri,Krasinski:2003zzb,Kun03}, in particular, Bianchi I, V, or IX (homogeneous metrics with, respectively, zero, negative, and positive constant spatial curvature). Minkowski (static Bianchi I), FLRW (isotropic Bianchi I, V, or IX), and Kasner (anisotropic Bianchi I) are special cases of these metrics. 

\medskip

\noindent {\bf (B)} Weyl invariance during the trans-Planckian phase also reduces the number of independent scale factors to two and it further constrains the post-Planckian metric to be \emph{isotropic}. In fact, a Weyl transformation makes large and small distances the same and leaves the angles invariant. Hence, the geodesic distance between particles can be made arbitrarily small, so that everything was in causal contact in the Weyl phase. In particular, all the average values of the observables (including the components of the energy-momentum tensor $T_{\mu\nu}$) were the same at any point of the spatial section. Therefore, regions in the spatial section at different angles had the same temperature because particles at different angles were in contact in the past.

\medskip

\noindent {\bf (C)} To make the final step, we recall two well-known mini theorems. 
\begin{theo}
	Any FLRW line element with intrinsic curvature $\textsc{k}$,
	\ba
	\rmd \hat s^2 &=& \hat g_{\mu\nu} \rmd x^\mu\rmd x^\nu\label{RW}\\
	&=& - \rmd t^2 + a^2(t) \left[\frac{\rmd r^2}{1 - \textsc{k}\,r^2} + r^2 \left(\rmd \theta^2 + \sin^2 \theta \, \rmd \varphi^2 \right)\right], \nn
	\textsc{k} &=& 0, +1, -1\non
	\ea
	(where $r$, $\theta$, and $\varphi$ are the spherical coordinates on the spatial section in four dimensions) is conformally flat. 
\end{theo}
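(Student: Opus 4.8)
The plan is to reduce the claim, in all three cases $\textsc{k}=0,\pm1$, to the statement that the ``cosmic rest-frame'' spacetime of constant spatial curvature is locally conformal to Minkowski space, and then to exhibit the conformal map explicitly. Two reductions apply uniformly. First, introducing conformal time $\eta$ through $\rmd t = a\,\rmd\eta$ factors an overall $a^2(\eta)$ out of \Eq{RW}, so that $\rmd\hat s^2 = a^2(\eta)\,\rmd s^2_\textsc{k}$ with $\rmd s^2_\textsc{k} = -\rmd\eta^2 + \rmd\sigma^2_\textsc{k}$ and $\rmd\sigma^2_\textsc{k}$ the metric of the maximally symmetric three-space of curvature $\textsc{k}$; since conformal flatness is unaffected by multiplication by a positive function, it is enough to treat $\rmd s^2_\textsc{k}$. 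Second, in $\rmd\sigma^2_\textsc{k}$ one trades $r$ for $\chi$ via $r=\chi,\sin\chi,\sinh\chi$ for $\textsc{k}=0,+1,-1$, bringing $\rmd s^2_\textsc{k}$ to the form $-\rmd\eta^2+\rmd\chi^2+\Sigma_\textsc{k}^2(\chi)\,\rmd\Omega^2$ with $\Sigma_\textsc{k}=\chi,\sin\chi,\sinh\chi$ and $\rmd\Omega^2$ the round metric on $S^2$.

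For $\textsc{k}=0$ there is nothing to do: $\rmd s^2_0$ is already the Minkowski metric in spherical coordinates, so $\rmd\hat s^2=a^2\,\eta_{\mu\nu}\rmd x^\mu\rmd x^\nu$. For $\textsc{k}=-1$ I would use the Milne substitution $T\coloneqq\rme^{\eta}$, which recasts $\rmd s^2_{-1}$ as $T^{-2}\bigl[-\rmd T^2+T^2(\rmd\chi^2+\sinh^2\chi\,\rmd\Omega^2)\bigr]$; the bracket is flat Minkowski space written in Milne coordinates (the interior of a light cone), so $\rmd s^2_{-1}$, hence $\rmd\hat s^2$, is conformal to flat space. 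For $\textsc{k}=+1$ I would pass to null coordinates $u=\eta-\chi$, $v=\eta+\chi$ (so that $-\rmd\eta^2+\rmd\chi^2=-\rmd u\,\rmd v$) and then set $\tilde u=\tan(u/2)$, $\tilde v=\tan(v/2)$; the elementary identities $\rmd u\,\rmd v = 4\cos^2(u/2)\cos^2(v/2)\,\rmd\tilde u\,\rmd\tilde v$ and $\sin^2\chi = 4\cos^2(u/2)\cos^2(v/2)\,\tilde R^2$ with $\tilde R\coloneqq(\tilde v-\tilde u)/2$ give $\rmd s^2_{+1}=4\cos^2(u/2)\cos^2(v/2)\,[-\rmd\tilde u\,\rmd\tilde v+\tilde R^2\,\rmd\Omega^2]$, whose bracket is again the Minkowski metric (now on a bounded diamond, since the Einstein static cylinder $\mathbb{R}\times S^3$ is conformal to a finite region of Minkowski space). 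Multiplying back the $a^2$ factor proves the statement in all three cases.

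A shorter, coordinate-free route I would also mention is the Weyl-tensor criterion: in $n=4$ a metric is locally conformally flat if and only if its Weyl tensor vanishes. For \Eq{RW}, the electric and magnetic parts of the Weyl tensor relative to the comoving observer $u^\mu$ are symmetric, trace-free, purely spatial tensors invariant under the local $SO(3)$ isotropy group of the slices; the only such tensor is zero, so $C_{\mu\nu\rho\sigma}\equiv0$ and the metric is conformally flat. There is no serious obstacle in either approach; the one point deserving care is that ``conformally flat'' is meant locally---in the closed case $\textsc{k}=+1$ the compact spatial sections can only be mapped onto a bounded portion of Minkowski space---but local conformal flatness is precisely what is needed for the ensuing step (C) that identifies $\textrm{FLRW}_{\textsc{k}=\pm1}$ with $\textrm{FLRW}_{\textsc{k}=0}$ up to a conformal factor.
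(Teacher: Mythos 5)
Your proof is correct, and for $\textsc{k}=\pm1$ it is in fact more explicit than the paper's. The paper exhibits a single transformation meant to cover all $\textsc{k}$ at once: it passes to the isotropic radial coordinate $\rho=\exp\int \rmd r/(r\sqrt{1-\textsc{k}\,r^2})$, which makes the \emph{spatial} metric manifestly conformal to flat $\mathbb{R}^3$ with factor $r^2/\rho^2$, and then sets $T=\int\rmd t/\om$ with $\om=a\,r/\rho$; since $\om$ depends on $\rho$ when $\textsc{k}=\pm1$, that last step is only schematic (a genuine mixing of the time and radial coordinates is required, which the paper delegates to the reference for the $\textsc{k}=+1$ case). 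You instead split into cases after extracting the conformal-time factor and introducing the hyperspherical coordinate $\chi$, using the Milne substitution for $\textsc{k}=-1$ and the null-coordinate, tangent-half-angle map (the standard conformal embedding of the Einstein static cylinder into a Minkowski diamond) for $\textsc{k}=+1$; these supply exactly the $(T,\rho)$ mixing that the paper's one-line formula elides, at the cost of giving up the uniform-in-$\textsc{k}$ presentation. Your second, coordinate-free argument via $C_{\mu\nu\rho\s}=0$ is precisely the alternative the paper itself mentions immediately after its proof, and your symmetry justification (the electric and magnetic parts of the Weyl tensor are $SO(3)$-invariant, symmetric, trace-free spatial tensors, hence vanish) establishes it without computation. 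Your closing caveat that conformal flatness is local, and that for $\textsc{k}=+1$ the image is only a bounded region of Minkowski space, is a point the paper leaves implicit but which is worth keeping for the subsequent identification of the $\textsc{k}=\pm1$ and $\textsc{k}=0$ universes.
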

\begin{proof}
	For all values of $\textsc{k}$, there exists a coordinate transformation $(t, \, r, \, \theta, \, \varphi)\to(T, \, \rho, \, \theta, \, \varphi)$ such that the line element \Eq{RW} turns into
	\ba
	&& \rmd\hat s^2 = \om^2(x) \left[- \rmd T^2 + \rmd \rho^2 + \rho^2  \left(\rmd \theta^2 + \sin^2 \theta \, \rmd \varphi^2 \right) \right]  , \nn
	&& \om(x) =\left\{\begin{matrix} \hspace{-.25cm}\om(T) &\quad\hspace{-.3cm} {\rm for} \quad \textsc{k} = 0 \\
		\om(T, \rho) &\quad {\rm for} \quad \textsc{k} = \pm 1\end{matrix}\right.,
	\label{RWafterC}
	\ea
	where:
	$$T=\int\frac{\rmd t}{\om}\,, \qquad \rho=\exp\int \frac{\rmd r}{r\sqrt{1 - \textsc{k}\,r^2}}\,.$$ 
	Therefore, the spacetime is conformally equivalent to Minkowski but, in general, it is not homogeneous because $\om(x)=a(t)\, r/\rho(r)$ may depend on the radial coordinate as well as on the time coordinate.
\end{proof}
For instance, the coordinate transformation for the case $\textsc{k}=+1$ can be found in \cite{Narlikar:1986kr,Iihoshi:2007uz}. 
The case $\textsc{k} = -1$ can be worked out similarly \cite{Iihoshi:2007uz,Ibison:2007dv}. 

Another way to prove the theorem is to notice that the Weyl tensor vanishes exactly for the metric \Eq{RW}, $C_{\mu\nu\rho\s}=0$.

In the previous theorem, we did not call upon Weyl invariance and the above result applies to any diffeomorphism (Diff) invariant theory. In general, conformal equivalence of two metrics does not mean physical equivalence. Let us consider now a Weyl-invariant gravitational theory, i.e., a theory invariant under the symmetry Weyl$\times$Diff.
\begin{theo}
	In a Weyl$\times$Diff invariant theory, any FLRW solution \Eq{RW} with $\textsc{k}=\pm1$ is physically equivalent to the FLRW spacetime with $\textsc{k} = 0$. 
\end{theo}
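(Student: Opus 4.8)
The plan is to bootstrap on the previous theorem together with the defining Weyl$\times$Diff invariance of the action \Eq{action}. By the previous theorem, for $\textsc{k}=\pm1$ the FLRW line element \Eq{RW} is brought by a diffeomorphism into the form \Eq{RWafterC}, i.e.\ into $\om^2(T,\rho)$ times the Minkowski metric; thus the $\textsc{k}=\pm1$ FLRW metric $\hat g_{\mu\nu}$ is \emph{conformally flat}. First I would simply record this, stressing that $\om$ is a genuine conformal factor which may depend on $\rho$ as well as on $T$ (so the intermediate metric need not be homogeneous, but that is irrelevant for what follows).

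The second step is where the extra symmetry does the real work. In a theory invariant under Weyl$\times$Diff, a metric and any local rescaling $\hat g_{\mu\nu}\to\Om^2\hat g_{\mu\nu}$, accompanied by the compensating rescalings \Eq{MatterConfInv} of the dilaton and matter fields as in \Eq{WI}, describe the same physics: the action \Eq{action} and all observables are invariant by construction, and, the theory being finite, there is no conformal anomaly to spoil this at the quantum level (section \ref{ficon}). Choosing $\Om=1/\om$ then shows that the field configuration built on the $\textsc{k}=\pm1$ FLRW background is physically equivalent to the one built on flat space $\eta_{\mu\nu}$. This is precisely the point at which the present theorem goes beyond the previous one: conformal equivalence of two metrics is \emph{not} physical equivalence in a generic Diff-invariant theory, but it \emph{is} once Weyl invariance is imposed.

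The third step identifies the flat representative with a $\textsc{k}=0$ FLRW spacetime and closes the argument. Minkowski space is itself the $\textsc{k}=0$ FLRW metric with constant scale factor; and, conversely, an arbitrary $\textsc{k}=0$ FLRW metric $-\rmd t^2+a^2(t)\,\rmd\vec{x}^2=a^2(\eta)(-\rmd\eta^2+\rmd\vec{x}^2)$, written in conformal time $\rmd\eta=\rmd t/a$, is again a Weyl rescaling of $\eta_{\mu\nu}$. Hence all FLRW backgrounds with $\textsc{k}=0,\pm1$ belong to a single physical equivalence class, and in particular any $\textsc{k}=\pm1$ FLRW solution is physically equivalent to the $\textsc{k}=0$ one, as claimed.

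The local content here is essentially free, being already contained in \Eq{RWafterC}; the delicate part, and the step I would treat most carefully, is making ``physical equivalence'' watertight. One must check that the Weyl rescaling is a symmetry of the \emph{full} action \Eq{action}, nonlocal form-factor sector and all dilaton/matter couplings included, and not merely of some truncation; this is guaranteed by the construction of section \ref{NLQGsec}. A genuine caveat that I would flag without resolving is global in nature: for $\textsc{k}=+1$ the conformal factor $\om(T,\rho)$ degenerates at the point of the compact spatial section that the coordinate map sends to infinity, so the equivalence is cleanest patchwise; for the early-universe discussion of the following sections this suffices.
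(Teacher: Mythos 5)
Your proposal follows essentially the same route as the paper: invoke Theorem 1 to exhibit the $\textsc{k}=\pm1$ metric as $\om^2(T,\rho)\,\eta_{\mu\nu}$, then use Weyl invariance to remove the conformal factor and land on the $\textsc{k}=0$ orbit. The only cosmetic difference is that you rescale all the way to Minkowski and then note that flat FLRW sits on the same orbit, whereas the paper rescales directly with $\Omega^2=a^2(T)/\om^2(T,\rho)$ to reach a homogeneous flat FLRW metric in one step; these are interchangeable.

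What you do omit is the one point the paper treats as the genuinely delicate part of ``physical equivalence.'' Under the compensating transformation \Eq{WI}, the dilaton acquires a $\rho$-dependence, $\phi\to\Om^{-1}\phi$, i.e.\ the gauge-transformed configuration contains an \emph{inhomogeneous} scalar profile, which at face value clashes with the Cosmological Principle that the theorem is supposed to underwrite. Your blanket statement that ``the action and all observables are invariant by construction'' asserts the conclusion here rather than establishing it: one must check that this inhomogeneity does not leak into anything measurable. The paper closes this loophole concretely by showing (Appendix \ref{appeC}) that the dilaton drops out of the affinely parametrized geodesic equation \Eq{affine} for massless particles, so the propagation of light --- the actual observable probing the background --- is blind to the inhomogeneous compensator. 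You should add this step, or an equivalent argument for whichever observables you have in mind. On the other side of the ledger, your two flagged caveats (Weyl invariance of the full nonlocal action including the form-factor sector, guaranteed by the construction of section \ref{NLQGsec} and by finiteness; and the degeneration of $\om$ at the antipodal point for $\textsc{k}=+1$) are legitimate and the second is not discussed in the paper at all, so that is a point in your favor.
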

\begin{proof}
	According to Theorem 1, we can first make a coordinate transformation to map \Eqq{RW} into the inhomogeneous, conformally flat line element \Eq{RWafterC} and, afterwards, we can make a Weyl transformation to a homogeneous and spatially flat FLRW spacetime. The explicit Weyl transformation reads
	\be\label{RWafterC2}
	\hat g^\prime_{\mu\nu} = \Omega^2 \hat g_{\mu\nu} \, , \qquad \Omega^2 = \frac{a^2(T)}{\om^2(T,\rho)}\,, 
	\ee
	where $\hat g_{\mu\nu}$ is given in \Eqq{RWafterC}. Therefore, in a Weyl-invariant theory any FLRW solution is equivalent to a spatially flat ($\textsc{k}=0$) FLRW solution.  
	
	On the other hand, the dilaton, which is a scalar under Diff, transforms under Weyl (according to \Eqq{WI}) into an inhomogeneous field violating the empirically validated Cosmological Principle of homogeneity and isotropy. However, the dilaton is not present in the geodesic equation \Eq{affine} for massless particles in affine parametrization, since Weyl covariance means
	\be
	D^2_{\la}[\bm{\hat g}]\,x^\s = 0\qquad\Longrightarrow\qquad D_{\la^*} [\bm{\hat g}^*]\,x^\s = 0\,,
	\ee
	which does not involve the dilaton. Hence, according to this Weyl covariant geodesic principle, an FLRW universe with $\textsc{k} = \pm1$ is gauge equivalent to an FLRW universe with $\textsc{k}=0$. 
\end{proof}
The details of both theorems are independent of the underlying theory except for the requirement (implicit in Theorem 2) of finiteness, without which one cannot invoke exact Weyl invariance at the classical as well as at the quantum level.

Having established that FLRW universes with $\textsc{k}=0,\pm 1$ all stay on the same gauge (Weyl$\times$Diff) orbit, it is immediate to see that $\textsc{k}=0$ is the preferred choice. To understand this, let us make an analogy with electromagnetism. Because of ${\rm U}(1)$ gauge invariance, the longitudinal polarization of the photon can be fixed to zero. Indeed, such polarization changes under gauge transformations, while the transverse polarizations are gauge invariant. Thus, the longitudinal polarization of the photon in electromagnetism is the analogue of the intrinsic curvature $\textsc{k}$ in the FLRW universe in Weyl gravity. To observe $\textsc{k}=\pm 1$ would be as unphysical as observing the longitudinal polarization of the photon.

The conclusion of the above arguments is that the metric at the onset of the Higgs phase is either Minkowski or flat FLRW, which are among the maximally symmetric ``ground states'' of conformal gravity \cite{Antoniadis:1985fd}. These arguments do not come full circle empirically because they do not fix the post-conformal Universe to be expanding and they still allow a contracting or a static metric. Still, they go a long way to set the post-Planckian initial conditions. Note that our conclusions, which are mainly based on quantum Weyl invariance, fully agree with \cite[section 12.7]{Narlikar:1986kr}, where the quantum cosmological transition probability of the universe was computed in Einstein gravity in the presence of conformal fluctuations. The setting there is different, but it is heavily affected by the properties of the Weyl rescaling as in our case. 


\section{Solutions to the problems of the hot big bang}\label{ProblemsCosmology}

The Weyl phase described in section \ref{Trans-Planckian-conformal-phase} can serve as an alternative to inflation to solve all the main problems of the standard hot-big-bang model of the early universe: the big-bang or singularity problem, the flatness problem, the horizon problem, and the monopole problem, plus the trans-Planckian problem. Since perturbations are generated after the Weyl phase, the reason why CMB temperature anisotropies exist and are as small as $\de T/T\sim 10^{-5}$ finds an answer in the Higgs phase described in section \ref{higsphs}; this is discussed in the companion paper \cite{Calcagni:2022tuz}. In this section, we will discuss the first stage of this two-step scenario alternative to inflation.


\subsection{Singularity problem and past completeness}\label{Singularity Problem}

The singularity issue is not whether a particular background is geodesically complete or not, but whether broad classes of spacetimes (for instance, with a cosmological expansion or with spherical symmetry) are geodesically complete, either in general or in a given theory of gravitation. In particular, one can discover examples of bouncing cosmologies in any otherwise pathological theory, and the question is how typical these solutions are. In the case of Einstein gravity, geodesic incompleteness is established through several sophisticated theorems, the simplest and most powerful being the Borde--Guth--Vilenkin (BGV) theorem \cite{BGV}. According to it, any universe that expanded in average during most of its history has past-incomplete worldlines. This is not enough to establish the existence of a global singularity such as the big bang, for which past-incompleteness should hold for all observers at the same time. However, it is strong evidence in favor of such a conclusion, since it is based on a background-independent proof which does not rely on any specific metric (such as FLRW), nor on any specific dynamical theory, nor on inflation.

In the case of Weyl gravity, the BGV theorem does not apply because, in the trans-Planckian phase, it is not even possible to talk about expanding backgrounds. The average expansion condition at the core of the BGV theorem cannot be met in a theory where the dynamics is Weyl-invariant. The expansion condition is not a conformally invariant statement and, vice versa, a Weyl transformation can always make an expanding background static or even contracting. To put it simply, the big-bang singularity issue becomes meaningless. 

This solution of the singularity problem is not based on any particular background in Weyl gravity but can be better appreciated when combined with the results on the Higgs phase described in section \ref{incon}. In the Weyl phase, the theory \Eq{action} in terms of the fields \Eq{MatterConfInv} is invariant under the Weyl transformation \Eq{WI}. Therefore, for any specific transformation \Eq{WI} with given rescaling $\Om^2=S(x)$, both the pair $(\hat{g}_{\mu\nu}, \phi)$ and the rescaled fields
\ba
\hat{g}^*_{\mu\nu} \coloneqq S(x)\, \hat{g}_{\mu\nu} \,, \qquad \phi^* \coloneqq S^{-\frac12}(x)\, \phi \,, 
\label{solStar}
\ea 
solve the equations of motion exactly. In particular, as evident from \Eqq{LEOM0}, the pair $(\hat{g}_{\mu\nu}, \phi)=(\hat{g}_{\mu\nu}^{\rm FLRW},\Mpl/\sqrt{2})$ is an exact solution of the equations of motion for the theory \Eq{action}, where $\hat{g}_{\mu\nu}^{\rm FLRW}$ is the flat FLRW metric in conformal time $\t=\int\rmd t/a(t)$ (where $t$ is proper time) defined by the line element
\be
\rmd \hat s^2_{\rm FLRW} = \hat{g}_{\mu\nu}^{\rm FLRW}\rmd x^\mu\rmd x^\nu = a^2(\t) \left(-\rmd\t^2 +\rmd\bm{x}^2\right),\label{FLRWeta}
\ee
where $a$ is the cosmological scale factor. If we take the rescaling $S(x) = S(\t) = 1/a^2(\t)$, the solution \Eq{solStar} reads
\be
\hat{g}^*_{\mu\nu} = 
 \eta_{\mu\nu}\,, \qquad \phi^* = \frac{\Mpl}{\sqrt{2}}\,a(\t)\,, 
\label{solMink}
\ee
which is Minkowski spacetime in the presence of a nontrivial dilaton profile. In other words, in a Weyl-invariant theory the cosmic expansion can be reinterpreted as a manifestation of Weyl invariance and, in particular, as the dynamics of the dilaton field, while leaving physical predictions unaltered.\footnote{This fact was already noted in classical scalar-tensor models concerning the equivalence of the Einstein and the Jordan frame \cite{Armendariz-Picon:2002jez,Catena:2006bd,Deruelle:2010ht}.}

However, as proved in appendix \ref{appeC}, the dilaton decouples from the geodesic equation of massless particles. Therefore, photons and any other massless particles move in Minkowski spacetime, which is geodesically complete (appendix \ref{GeoCompMink}). Indeed, calling $\la$ the affine parameter along the geodesic and $\xi_0$ a conserved scalar quantity made of one of the Killing vectors, the main geodesic equation
\be
\xi_0  = - \frac{\rmd\t}{\rmd\la}
\ee
is well-defined for $\t \in (-\infty, +\infty)$ and $\lambda \in (-\infty, +\infty)$, contrary to the analog geodesic equation \Eq{FLRWsing} in the FLRW background in Einstein gravity. 

This mechanism for solving the big-bang problem, and actually the wider singularity problem in general relativity, is independent of the details of the theory, but we stress the necessity to have a finite theory to get exact Weyl invariance in the deep UV. This fact was applied also in \cite{Bambi:2016wdn} to black holes.

Finally, it is interesting to stress a self-consistency cross-check of all the above results: the theory naturally realizes Penrose's Weyl curvature hypothesis \cite{Penrose:1979azm,Pen86,Pen11} (see \cite{Barrow:2002is,Stoica:2012gg,Hu:2021pfh,Kiefer:2021zqz} for reviews) and the beginning of the Universe is Weyl flat. At a singularity, in general, both the Kretschmann invariant $R_{\mu\nu\s\t}R^{\mu\nu\s\t}$ and the Weyl quadratic operator $C_{\mu\nu\s\t}C^{\mu\nu\s\t}=R_{\mu\nu\s\t}R^{\mu\nu\s\t}-2R_{\mu\nu}R^{\mu\nu}+R^2/3$ blow up \cite{Goode:1985ab}. However, the Weyl tensor $C_{\mu\nu\s\t}$ is conformally invariant and a Weyl transformation cannot get rid of singularities such that $\sqrt{|g|}\,C_{\mu\nu\s\t}C^{\mu\nu\s\t}=\infty$. The failure of the BGV theorem as a background-independent result in asymptotically local quantum gravity implies that the initial conditions set by Weyl symmetry are indeed special and that, in particular, the Weyl curvature must be finite (no singularity) or even vanishing. But this is exactly the conclusion we reach from the discussion in section \ref{incon}, which singles out Minkowski and flat FLRW as the metrics emerging asymptotically from the trans-Planckian phase. These spacetimes have $C_{\mu\nu\s\t}=0$. Perturbations of Minkowski or FLRW have, in general, nonvanishing Weyl tensor but, on one hand, those generated during the Weyl phase can be rendered regular by a Weyl transformation\footnote{If a spacetime with metric $\hat g_{\mu\nu}$ is geodesically incomplete, then a Weyl transformation $\hat g^*_{\mu\nu}=S(x)\,\hat g_{\mu\nu}$ makes it geodesically complete (\Eqq{solStar}) and the same procedure can be applied to the perturbed metric $\hat g^*_{\mu\nu}+h_{\mu\nu}$, so that the Weyl-equivalent metric $\hat g^{**}_{\mu\nu}=S'(x)(\hat g^*_{\mu\nu}+h_{\mu\nu})$ is made regular again, with finite Weyl curvature. This can be achieved by choosing the conformal factor $S'(x)$ such that to compensate the most singular among the components of $h_{\mu\nu}$ and to make spacetime geodesically complete.} and, on the other hand, perturbations generated in the post-Planckian phase \cite{Calcagni:2022tuz} are harmless because they evolve at times and distances away from the Weyl regime and from the ultrashort scales where singularities would usually form.


\subsection{Flatness problem}\label{FlatProblem}

The flatness problem is solved universally in Weyl gravity. According to the results of section \ref{incon}, if in the Weyl phase we take the FLRW metric as an exact or an asymptotic solution (more on this in section \ref{sotns}), then it must have $\textsc{k}=0$, while analyticity of the metric, namely of $a(t)$ and of the spatial metric $g_{ij}$, forces $\textsc{k}$ to be zero also in the Higgs phase when Weyl invariance is broken. To put it simply, we only have to extend the solution analytically to all values of proper time $t$ in the future of the Weyl phase.\footnote{We can regard the spontaneous breaking of the continuous Weyl symmetry as associated to a phase transition (e.g., \cite[Chap.~8]{PeSc}). 
 The facts that the solutions of the equations of motion do not have any discontinuity at the time $\tpl$ of symmetry breaking and that, as we will see later, the scaling of two-point functions also does not change abruptly suggest that we are in the presence of a second-order phase transition. In this case, the change in the metric cannot be abrupt either, and one cannot suddenly pass from a class of flat backgrounds to a nonflat one.} Notice that, although in the non-Weyl phase we can make a Diff transformation to change $\textsc{k}=0$ to $\textsc{k}=\pm 1$, we do not have at our disposal the Weyl symmetry to end up again with an FLRW spacetime.

Weyl invariance resolves the fine tuning of the standard hot-big-bang model not only at Planckian scales, but also, by analyticity of the metric, at lower energy scales. The observational constraint on the value of the curvature density parameter $\Om_\textsc{k}\propto \textsc{k}$ so close to zero is completely consistent and devoid of fine tuning in a Weyl-invariant theory because, if $\textsc{k}=0$ in the early evolution of the universe, then it must be $\textsc{k}=0$ at any other later time. 

\subsection{Horizon problem}

In our model for the early Universe, the horizon problem is solved simply because, in the Weyl phase, spacetime distances do not have any physical meaning: large and small distances are actually the same. Hence, as we noted in section \ref{incon}, in this phase there is no horizon separating regions in causal contact from those without.

Such a causal structure can be explicitly shown in asymptotically flat or bouncing backgrounds such as those considered in section~\ref{sotns}, which are all physically equivalent in the Weyl phase. In the examples \Eq{BDM} or \Eq{Bounce}, the scale factor $a(t)$ is almost always smaller than the Hubble radius $H^{-1}$. For the solution \Eq{BDM}, $a\leq H^{-1}$, while for the solution \Eq{Bounce} it is always possible to find an instant $\bar t$ before which the scale factor $a(t)$ is smaller than the Hubble horizon. Hence, for $t$ small enough, the distance between particles in all the solutions presented below is always smaller than the Hubble radius and the particles in the Universe can interact with one another without violating causality. 

As shown in section \ref{incon}, the key to the choice of initial conditions after the Weyl phase lies in the solution to the horizon problem, which is also called homogeneity problem because it implies that homogeneity in the early universe (which gives rise to large-scale homogeneity and isotropy at late times) should be encoded in the initial conditions \cite{LiL}. In the case of inflation, the horizon problem and the problem of initial conditions are treated separately and one has to make two different assumptions: a) the slow-roll mechanism, which solves the horizon problem, and b) chaotic/eternal inflation, where one can obtain nearly homogeneous and isotropic inflationary patches from an arbitrarily inhomogeneous and anisotropic metric \cite{Ste83,Vilenkin:1983xq,Linde:1983gd,Linde:1986fd,Linde:1986fc,Starobinsky:1986fx,Win09}. In the case of asymptotically local quantum gravity, the solution to the horizon problem also solves the problem of initial conditions, at least in part. 


\subsection{Monopole problem}
 
Since we have at our disposal a well-defined theory at any energy scale, we do not need to extend the Standard Model of particle physics to a grand unified theory (GUT). The theory \Eq{action} is complete regardless of the form of the local action \Eq{localS} (provided $\cL_{\rm loc}$ does not have instabilities), which we can assume to be the Standard Model. In other words, the theory \Eq{action} is already a unified theory of all fundamental interactions. Here, the concept of unification is understood as a finiteness regime in which all interactions vanish. Indeed, according to a preliminary study \cite{Jax}, in the finite theory \Eq{action} all the scattering amplitudes $\cA$ are suppressed as the inverse of a polynomial in the UV regime,  $\cA \sim 1/k^{2n}$ ($n \in \mathbb{N}$). Therefore, there is no unification strictly speaking, but something closer to the scenario presented in \cite{Robinson:2005fj}, in which all the couplings run to zero in the UV regime. 

However, if we insist that the action \Eq{action} must describe a UV-finite and Weyl-invariant completion of a GUT, we must deal with the monopole problem. By construction, monopoles are exact solutions of the theory \Eq{action} whenever they are so for the local theory \Eq{localS}. In a GUT, monopoles originate from the spontaneous symmetry breaking of the gauge group at some scale $M_X$ because the expectation value of the Higgs field $\Phi$ is, in general, different in causally disconnected domains and can take the value $\langle \Phi \rangle =0$ in some regions and $\langle \Phi \rangle = v\neq 0$ in other regions. However, in asymptotically local quantum gravity the GUT scale lies in the trans-Planckian regime where the theory is Weyl invariant and there are no causally disconnected domains. Therefore, $\langle \Phi \rangle$ will be the same in the whole Universe in the Weyl phase.


\subsection{Trans-Planckian problem}\label{Trans-Planckian-problem}

Finally, we make a comment on the cosmological trans-Planckian problem \cite{Martin:2000xs,Brandenberger:2000wr} in asymptotically local quantum gravity. 

In standard cosmology, in order to get nearly-Gaussian scale-invariant primordial perturbations, it is essential to set the initial state for the perturbations to be the Bunch--Davis vacuum. The latter coincides with the Minkowski solution in the infinite past, when the perturbative modes are deep inside the horizon. Therefore, sufficiently back in the past, the equations for perturbations reduce to the ones in Minkowski spacetime. However, in our theory the cosmological metric is only conformally equivalent to the Minkowski metric, hence, the wavelength $\lambda\sim a(t)$ of the perturbations will be trans-Planckian at some instant $t$ in the past, namely, $\lambda \lesssim \lp$ for $t \lesssim \tpl$. From the effective field theory point of view, it is hard to explain why higher-derivative operators, at the linear and the nonlinear level in the perturbations, do not affect the observed spectrum when $\lambda \simeq O(\lp)$. Indeed, operators like
\be
E_i(\Phi_i) \frac{\B^n}{\Mpl^{2n}} E_i(\Phi_i)\,,
\label{Higher order term}
\ee
will be of the same order of magnitude as the Einstein--Hilbert term $R$ at the Planck energy scale. 

This puzzle has a simple solution in nonlocal gravity \Eq{action}, whose classical equations of motion have the structure \Eq{LEOM0}.
 The terms $O(E_i^2)$ are subleading because all the scattering amplitudes vanish at the quantum level in the UV regime, while the solutions of the linear part of the equations of motion \Eq{LEOM0}, 
\be
\rme^{\bar\H(\Hes)^{(0)}} E_i^{(1)} = 0 \, , 
\label{linear0}
\ee
where $^{(0)}$ and $^{(1)}$ denote, respectively, the background and the first order in the perturbation, are the same of the local Einstein theory coupled to matter. Nonlinear operators in the perturbations are suppressed because the $n$-point amplitudes vanish in the UV regime, as we saw in section~\ref{Trans-Planckian-conformal-phase}. Therefore, no classical higher-derivative operators such as \Eq{Higher order term}, which are quadratic in the curvature in the action, contribute to the linearized classical equations of motion \Eq{linear0}.


\section{Background solutions} \label{sotns}

In this section, we describe possible choices of the background metric at the onset of the Higgs phase; according to the results of section \ref{incon}, in the asymptotic future in the Weyl phase the metric can be either Minkowski or $\textsc{k}=0$ FLRW.

A conformal rescaling of the Minkowski background $\eta_{\mu\nu}$ generates infinitely many possibilities among which we select two possible classes of scenarios: (I) asymptotically flat spacetimes and (II) bouncing spacetimes. For scenario (I), we identify three subcases that we label (Ia), (Ib), and (Ic). Scenarios (Ia) and (Ib) are asymptotically Minkowski in the past and in the future, and FLRW at intermediate times. Scenario (Ic) is asymptotically Minkowski in the past and asymptotically FLRW in the future. Scenarios of type (II) are FLRW. A characteristic common to all these solutions is that they all describe a causally connected spacetime where $Ha\leq 1$ at all times in $\tpl=1$ units.


\subsection{Asymptotically flat backgrounds}

Taking again the analogy mentioned in section~\ref{incon} with the gauge fixing of the longitudinal polarization mode of the photon, one can regard Minkowski spacetime as a unique ``zero choice'' selected by conformal symmetry during and at the end of the trans-Planckian phase. This is the condition in the asymptotic past we will explore in this subsection. In the asymptotic future in the post-Planckian Higgs phase, one can choose either Minkowski or $\textsc{k}=0$ FLRW. Then, by continuity, we can take any metric interpolating between the two asymptotically. 

In \emph{scenario (Ia)}, the metric in conformal time is obtained making a rescaling \Eq{WI} with the choice \cite[section~3.2]{BiDa} 
\be
\Om^2 = a^2(\t) = A + B \tanh \frac{\t}{\t_\Pl}\,,
\label{Ceta}
\ee
where $A$ and $B$ are constants, so that the line element with the metric \Eq{solMink} turns into
\ba
\rmd s^{* 2} &=& - \left( A + B \tanh \frac{\t}{\t_\Pl}\right) \left( - \rmd\t^2 + \rmd\bm{x}^2 \right), \nn
\lim_{\t \rightarrow \pm \infty} a^2 &=& A \pm B\,, \qquad A \geq B\,,\label{BD}
\ea
where we assumed $A \geq B$ in order to have $a^2(\t)>0$ for all times. Therefore, the metric in \Eqq{BD} interpolates between two asymptotic Minkowski spacetimes for $\t \rightarrow  \pm \infty$.\footnote{Another metric interpolating between two Minkowski spacetimes in the infinite past and in the infinite future can be defined in terms of the error function, $a^2(\t)=A+B\,\erf(\t/\t_\Pl)$.} In particular, our Universe was not born from a big bang but from a Weyl phase in which geometry was described by Minkowski spacetime.

Another metric similar to \Eqq{BD} that characterizes \emph{scenario (Ib)} can be directly defined in proper time, 
\ba
\rmd s^{* 2} &=&   - \rmd t^2 + a^2(t)\, \rmd\bm{x}^2\,,\nn
a &=&  A + B \tanh \frac{t}{\tpl}, \label{BDM}\\
\lim_{t \rightarrow \pm \infty} a &=& A \pm B \, , \qquad A \geq B \,.\non
\ea
In order to show that in a Universe described by the interpolating metric \Eq{BDM} we avoid the horizon problem, we compute the Hubble radius $R_H$ and the particle radius $R_{\rm p}$, or their comoving counterparts $r_H$ and $r_{\rm p}$:
\ba
\hspace{-.5cm}R_H\!\!&\coloneqq&\!\!  a\,r_H\coloneqq\frac{1}{H} = \frac{a}{\dot{a}}\nn
\hspace{-.5cm}\!\!&=&\!\! \tpl \cosh^2\left(\frac{t}{\tpl}\right)\!\left(\frac{A}{B}+\tanh\frac{t}{\tpl}\right), \label{HubbleS1b}\\
\hspace{-.5cm}R_{\rm p}\!\! &\coloneqq&\!\! a\,r_{\rm p}\nn
\hspace{-.5cm}\!\!&\coloneqq&\!\! a(t)\int^t_{t_{\rm i}}\frac{\rmd t'}{a(t')}=(\t-\t_{\rm i})a\,,\label{errepi}\\
\hspace{-.5cm}\t\!\! & = &\!\! \int \! \frac{\rmd t}{a(t)} \nn
\hspace{-.5cm}\!\!& = &\!\!   \frac{A t - B \tpl  \! \ln \left[ A \cosh \left( \frac{t}{\tpl } \right) +B \sinh \left( \frac{t}{\tpl } \right) \right] }{A^2-B^2},\label{etat-1}
\ea
where the latter expression is valid for $A\neq B$. The particle radius is the actual definition of a causal region comprising all points in contact with the observer through light signals since an initial time $t=t_{\rm i}$. However, depending on the background it may be ill defined, as is the case here for $t<0$, and the Hubble radius provides an alternative criterion for causal contact.

The functions $a(t)$ and $R_H(t)$ are tangent at $t=0$, while $a < R_H$ for all $t \neq 0$ in $\tpl=1$ units. In particular, for $A>B$ the Hubble radius goes to infinity for $t\rightarrow \pm \infty$ (figure~\ref{fig2}), while for $A=B$ one has a degenerate spacetime in the asymptotic past (figure~\ref{fig3}),
\be
\lim_{t \rightarrow + \infty} R_H = + \infty\,, \qquad \lim_{t \rightarrow - \infty} R_H = \frac{\tpl}{2}\,.
\label{asymptote}
\ee
At the instant $t$, the wavelength of a perturbation is
\be
\lambda = a(t) \,\la_{\rm com}\,,\qquad \la_{\rm com}=\frac{2\pi}{|\bm{k}|}\,, 
\label{lambdaat}
\ee
where $\la_{\rm com}$ is the comoving wavelength and $|\bm{k}|$ is the comoving wavenumber. Therefore, the scale factor can be regarded as the proper wavelength with $\la_{\rm com}=1$: any perturbation with wavelength within the horizon would stay in the causally connected patch. Therefore, using the relative size with respect to the Hubble radius as a criterion, the distance between causally connected points is always smaller than the Hubble radius. Taking instead the particle radius with an initial time in the infinite past, one reaches the same conclusion, since in that case $R_{\rm p}=+\infty$. Therefore, all the matter in the Universe is located inside a casually connected domain. 
\begin{figure}
\bc
\includegraphics[width=8.5cm]{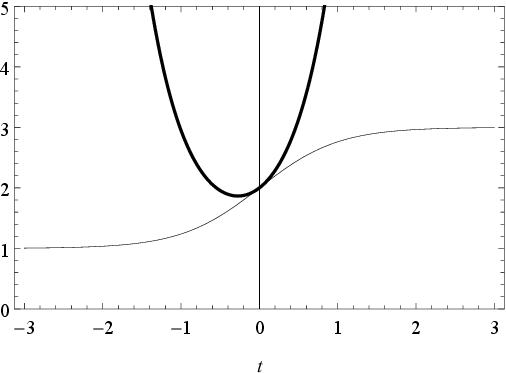}\\
\includegraphics[width=8.5cm]{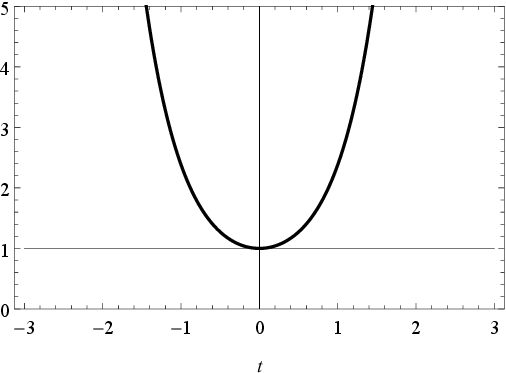}
\ec
\caption{Top: scale factor (proper wavelength) $a(t)$ \Eq{BDM} (solid thin curve) and Hubble radius $R_H$ \Eq{HubbleS1b} (solid thick curve) for $A > B$ and $\tpl=1$. Bottom: same as in the top panel for the comoving wavelength $\la_{\rm com}=1$ and the comoving Hubble radius $r_H$.}
\label{fig2}
\end{figure}
\begin{figure}
\bc
\includegraphics[width=8.5cm]{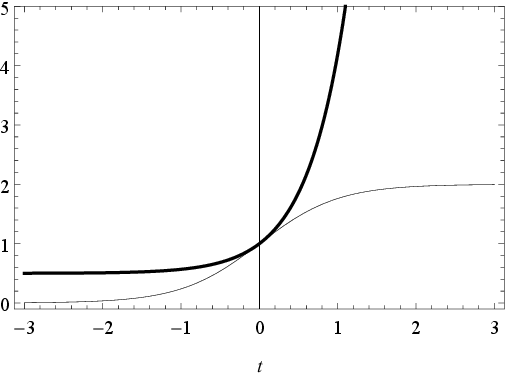}
\ec
\caption{Scale factor (proper wavelength) $a(t)$ \Eq{BDM} (solid thin curve) and Hubble radius $R_H$ \Eq{HubbleS1b} (solid thick curve) for $A=B$ and $\tpl=1$. The plot for comoving distances is similar to the one in the bottom panel of figure~\ref{fig2}.}
\label{fig3}
\end{figure}

Finally, in \emph{scenario (Ic)} the scale factor is
\bs\ba
a &=& \sqrt{A + \frac{t}{t_0} \, \rme^{\G (0,t/\tpl)}} \,, \label{KuzminConstant}\\
\lim_{t \rightarrow + \infty} a &=& \sqrt{\frac{t}{t_0}}\,,\label{KuzminConstant2}\\
\lim_{t \rightarrow - \infty} a &=& \sqrt{A}\,,
\ea\es
for which the Hubble radius is
\be
R_H = 2\frac{A t_0\, \rme^{-\G (0,t/\tpl)}+t}{1- \rme^{-t/\tpl}} 
> a \qquad \forall t \, . 
\label{HubbleKuzminConstant}
\ee
When $A=0$, the metric interpolates between a degenerate spacetime in the past and the solution of Einstein's theory for radiation (figure~\ref{fig4}), while for $A>0$ the metric approaches Minkowski spacetime in the past (figure~\ref{fig5}). This is the type of background described at the beginning of this section.
\begin{figure}
\bc
\includegraphics[width=8.5cm]{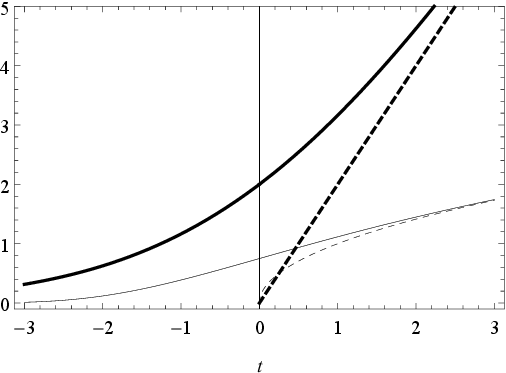}\\
\includegraphics[width=8.5cm]{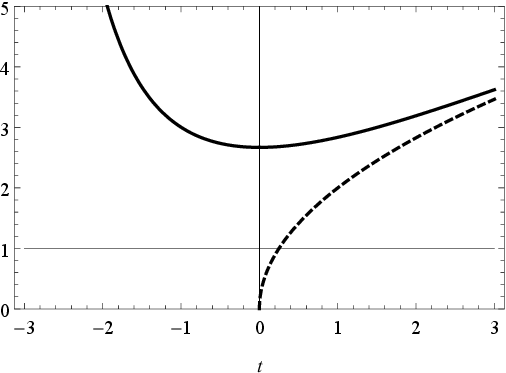}
\ec
\caption{Top: scale factor (proper wavelength) $a(t)$ \Eq{KuzminConstant} (solid thin curve) and Hubble radius $R_H$ \Eq{HubbleKuzminConstant} (solid thick curve) for $A=0$ and $t_0=\tpl=1$. The Einstein-gravity radiation-dominated solution \Eq{KuzminConstant2} and the corresponding $R_H=2t$ are shown as well (dashed thin and thick curves, respectively). Bottom: same as in the top panel for the comoving wavelength $\la_{\rm com}=1$ and the comoving Hubble radius $r_H$.}
\label{fig4}
\end{figure}
\begin{figure}
\bc
\includegraphics[width=8.5cm]{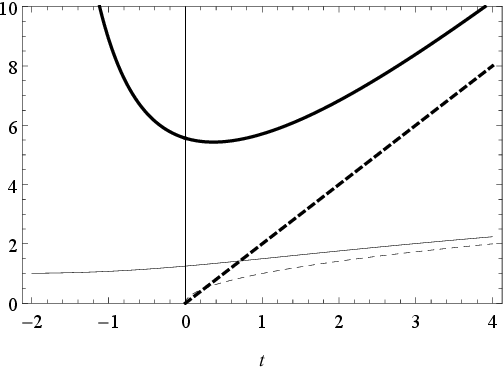}
\ec
\caption{Scale factor (proper wavelength) $a(t)$ \Eq{KuzminConstant} (solid thin curve) and Hubble radius $R_H$ \Eq{HubbleKuzminConstant} (solid thick curve) for $A=t_0=\tpl=1$. The general-relativity radiation-dominated solution \Eq{KuzminConstant2} with corresponding $R_H=2t$ are shown as well (dashed thin and thick curves, respectively). The plot for comoving distances is similar to the one in the bottom panel of figure~\ref{fig4}.}
\label{fig5}
\end{figure}


\subsection{Bouncing backgrounds}

\emph{Scenario (II)} is defined by a simple and quite general proposal for a bouncing Universe in proper time:
\bs\ba
a &=& \left[\left(\frac{t}{t_0}\right)^{2 n} + A^{2 n} \right]^{\frac{1}{4 n}},\qquad n \in \mathbb{N}^+,\label{Bounce}\\ 
\lim_{t \rightarrow \pm \infty} a &=& \sqrt{\frac{t}{t_0}} \,,\label{Bounce2} \\
\lim_{t \rightarrow 0} a &=& \sqrt{A} \, . 
\ea\es
Also in the bouncing Universe there is no horizon problem. For the metric \Eq{Bounce}, the Hubble radius is
\be
R_H = 2t\left[1+\left(\frac{At}{t_0}\right)^{2 n}\right]\,, 
\label{HubbleBounce}
\ee
which is divergent in $t=0$,
\be
\lim_{t \rightarrow 0^+} R_H = + \infty\, .
\ee
Therefore, $a(t)$ and $R_H(t)$ can cross each other never, once, or twice depending on the values of $t_0$ and $A$ (figures~\ref{fig6} and \ref{fig7}). In the first case, $a < R_H$ for all $t$; in the second case, $a<R_H$ for all $t$ before a certain point $\bar t$; in the third case, $a<R_H$ for $t>\tilde t$ or $t<\bar t$, where $0<\bar t<\tilde t$. We conclude that, in the bouncing Universe, there is always an instant $\bar t$ in the past before which particles had a chance to be in causal contact with one another.
\begin{figure}
\bc
\includegraphics[width=8.5cm]{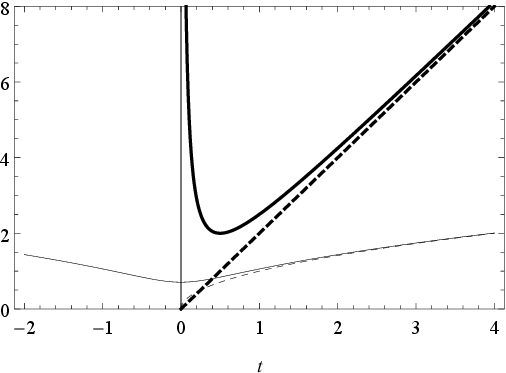}\\
\includegraphics[width=8.5cm]{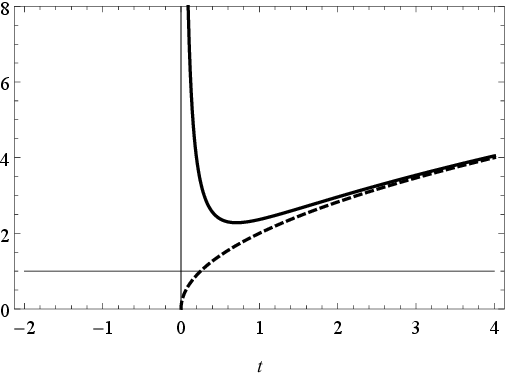}
\ec
\caption{Top: Scale factor (proper wavelength) $a(t)$ \Eq{Bounce} (solid thin curve) and Hubble radius $R_H$ \Eq{HubbleBounce} (solid thick curve) for $A=0.5$, $n=1$, and $t_0=1$. The general-relativity radiation-dominated solution \Eq{Bounce2} with corresponding $R_H=2t$ are shown as well (dashed thin and thick curves, respectively). In this plot, $a < R_H$ for all $t$. Bottom: Same as in the top panel but for the comoving wavelength $\la_{\rm com}=1$ and the comoving Hubble radius $r_H$.}
\label{fig6}
\end{figure}
\begin{figure}
\bc
\includegraphics[width=8.5cm]{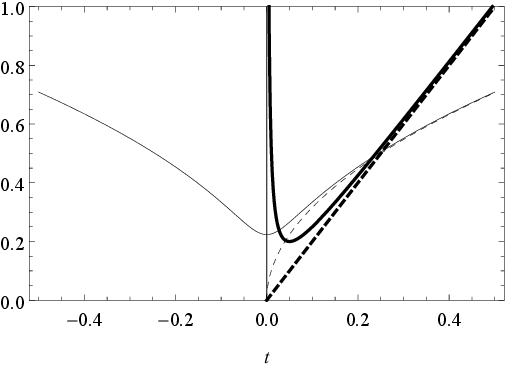}\\
\includegraphics[width=8.5cm]{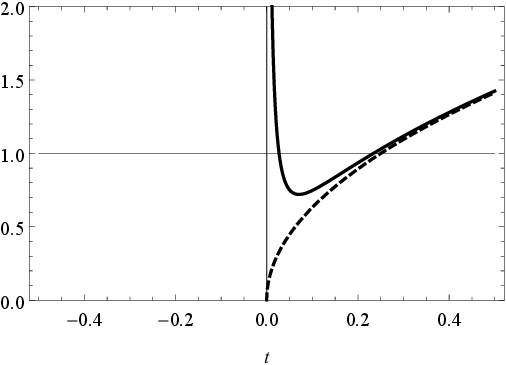}
\ec
\caption{Top: Scale factor (proper wavelength) $a(t)$ \Eq{Bounce} (solid thin curve) and Hubble radius $R_H$ \Eq{HubbleBounce} (solid thick curve) for $A=0.05$, $n=1$, and $t_0=1$. The Einstein-gravity radiation-dominated solution \Eq{Bounce2} with corresponding $R_H=2t$ are shown as well (dashed thin and thick curves, respectively). In this plot, we have an interval of time in which $a>R_H$. Bottom: Same as in the top panel but for the comoving wavelength $\la_{\rm com}=1$ and the comoving Hubble radius $r_H$.}
\label{fig7}
\end{figure}

From the scale factor \Eq{Bounce}, we can reconstruct the conformal time
\ba
\hspace{-1cm}\t &=& \int \frac{\rmd t}{a(t)}\nn
\hspace{-1cm}&=&\frac{t}{\sqrt{A}}{}_2F_1\left[\frac{1}{4 n},\frac{1}{2 n};1+\frac{1}{2 n};-\left(\frac{t}{At_0}\right)^{2 n}\right],\label{etat}
\ea 
where ${}_2F_1$ is the hypergeometric function.


\section{Discussion}\label{conclusions}

In this paper, we proposed a quantum-gravity scenario for the cosmology of the early Universe that passes through a Planckian Weyl-invariant phase. It is this Weyl symmetry that geometrically solves, in a simple and natural way, the cosmological problems of the standard hot-big-bang model and, at the same time, generates a scale-invariant power spectrum at large scales without the need of any inflationary era.

These findings rely on a recently proposed asymptotically local field theory for gravity coupled to matter consistent with unitarity, causality, and finiteness at the quantum level. Indeed, our statements above are based on the property of finiteness, that secures Weyl invariance at the classical as well as at the quantum level. Once Weyl symmetry is spontaneously broken, two-point correlation functions get perturbative quantum log-corrections that make primordial spectra deviate from exact scale invariance \cite{Calcagni:2022tuz}.

At the theoretical level on which we focussed in this work, the problem of the initial conditions of the Universe at the beginning of the Higgs phase is still partially open (section \ref{sotns}); within asymptotically local quantum gravity, it is not yet clear how an \emph{expanding} FLRW background emerges from the Weyl phase, although we have reduced the available choices to homogeneous and isotropic metrics with either expansion direction. A nonperturbative treatment of the removal of the big bang could help to better understand the role of quantum physics in the solution we propose, where classical Weyl invariance and its quantum preservation have been implemented in two somewhat separate steps. For example, obtaining the minisuperspace wavefunction of the Universe at early times in Hamiltonian formalism would encode both steps into one, albeit with the limitations inherent to that approach.
	
A much more pressing issue is to fill the details of the mechanism for the breaking of Weyl invariance. After sketching the differences a spontaneous Weyl symmetry breaking should or could have with respect to the Higgs mechanism, here we have given a general recipe to include a dilaton potential with absolute minima and provided two such examples. To claim to be complete in a structural sense, our approach should work out at least one of these examples. These and other open questions await to be addressed in the future. 


\section*{Acknowledgments}

The authors are supported by grant PID2020-118159GB-C41 funded by the Spanish Ministry of Science and Innovation MCIN/AEI/10.13039/501100011033. G.C.\ thanks A.\ De Felice and S.\ Kuroyanagi for useful discussions. L.M.\ is supported by the Basic Research Program of the Science, Technology, and Innovation Commission of Shenzhen Municipality (grant no.\ JCYJ20180302174206969).


\appendix


\section{Theory at the scale \texorpdfstring{$\Est$}{Lambdahd}} \label{appeA}

In this section, we derive the limit of the classical and the quantum theory in the Higgs phase, i.e., in the energy range \Eq{interm}. 


\subsection{Classical theory at the scale \texorpdfstring{$\Est$}{Lambdahd}} \label{appeA1}

The action in the UV limit reads
\ba
S_{\Est} \simeq \int \rmd^D x \, \sqrt{|g|}\, E_{\mu\nu} F(\Hes)_{\s\t}^{\mu\nu} E^{\s\t} \, , 
\ea
where the form factor $F(\Hes)$ is defined in \Eqq{FF} and $\exp\H(\Hes)$ approaches the polynomial \Eq{PolySimple3} in the UV regime. According to the results of section \ref{sec2a}, we can ignore the constant term in the form factor when $E\gtrsim \Est$, where $\Est=\sqrt{{\rm max}(1,b_0)/b}\,\Lst$. Then,
\ba 
2 \Hes F(\Hes) &=& \rme^{\bar{\H}(\Hes)} - 1\stackrel{\textsc{uv}}{\simeq} \rme^{\tilde{\g}_\textsc{e}} p(\Hes)\nn
& \stackrel{E\gtrsim \Est}{\simeq} & \rme^{\tilde{\g}_\textsc{e}} 
\left(-b\, z + z^4 \right) 
= \rme^{\tilde{\g}_\textsc{e}} 
\left[ -b\frac{4\Hes}{\Mpl^{2}\Lz^2} + \left(\frac{4\Hes}{\Mpl^{2}\Lst^2}\right)^4\right].\label{FUV}
\ea
Therefore,
\be
F(\Hes)= -\frac{2\rme^{\tilde{\g}_\textsc{e}}b}{\Mpl^{2}\Lz^2} + \frac{2^7\rme^{\tilde{\g}_\textsc{e}}}{(\Mpl^{2}\Lst^2)^4}\Hes^3\,.
\ee
Below $\Lst$, we can also ignore the $\Hes^3$ term, so that the gravitational sector of the classical action at the scale $\Est$ is
\ba 
S_{\Est}^{\rm vacuum}\! &=& -\frac{2\rme^{\tilde{\g}_\textsc{e}}b}{\Mpl^{2}\Lz^2} \int \rmd^D x \, \sqrt{|g|} E_{\mu\nu} E^{\mu\nu}\nn
\!&=&-\frac{2\rme^{\tilde{\g}_\textsc{e}}b}{\Mpl^{2}\Lz^2}\int \rmd^D x \, \sqrt{|g|} \, \frac{\Mpl^{2}}{2}\, G_{\mu\nu} \, \frac{\Mpl^{2}}{2} \, G^{\mu\nu} \nn
\!&=& -\frac{\rme^{\tilde{\g}_\textsc{e}}\Mpl^{2}b}{2\Lz^2}\int \rmd^D x \, \sqrt{|g|}  \left(R_{\mu\nu} \,  R^{\mu\nu}+ \frac{D-4}{4} R^2 \right). 
\label{StelleLimitD}
\ea
The full action also consists of the matter-matter and matter-gravity contributions coming from the operator $E_{\mu\nu} E^{\mu\nu}$ in \Eqq{action}. The total fourth-order action reads
\ba 
S_{\Est} &=& -\frac{2\rme^{\tilde{\g}_\textsc{e}}b}{\Mpl^{2}\Lz^2} \int \rmd^D x \, \sqrt{|g|} \, \left(\frac{\Mpl^{2}}{2}\, G_{\mu\nu} - \frac{1}{2}T_{\mu\nu}\right)\left(\frac{\Mpl^{2}}{2}\,G^{\mu\nu} - \frac{1}{2}T^{\mu\nu} \right)\nn
&=& -\frac{\rme^{\tilde{\g}_\textsc{e}}b}{\Mpl^{2}\Lz^2}\int \rmd^D x \, \sqrt{|g|}  
\left[\frac{\Mpl^{4}}{2}\left(R_{\mu\nu}\,R^{\mu\nu}+\frac{D-4}{4} R^2 \right) \right.\nn
&&\qquad\qquad\qquad\qquad\qquad\left.- \Mpl^{2} \left(R_{\mu\nu}-\frac{1}{2} g_{\mu\nu} R \right) T^{\mu\nu} 
+ \frac12 T_{\mu\nu} T^{\mu\nu}\right]\!. 
\label{StelleLimitDMatter}
\ea


\subsection{Quantum effective action at the scale \texorpdfstring{$\Est$}{Lambdahd}} \label{appeA2}

The gravitational action at the energy scale $\Est$ is actually Stelle gravity for a particular choice of the coefficients for the ${\bf Ric}^2$ and the $R^2$ operators in \Eqq{StelleLimitD}. According to \cite{Avramidi:1985ki,BOS}, the divergent part of the quantum effective action in Stelle gravity in $D=4$ dimensions reads
\be
\G^{\rm div} = - \frac{1}{2\ve} \frac{1}{(4 \pi)^2} \int \rmd^4 \sqrt{|g|} \left(
\b_2 C_{\mu\nu\rho\s} C^{\mu\nu\rho\s} + \b_3 R^2 \right)
\, , \qquad \frac{1}{\ve} = \ln \frac{\Lambda_\textsc{uv}^2}{\mu^2} \,, 
\label{Gdiv}
\ee
where $C_{\mu\nu\rho\s}$ is the Weyl tensor, $\b_{2,3}$ are beta functions, $\Lambda_\textsc{uv}$ is the cut-off energy scale in cut-off regularization, and $\mu$ is the renormalization energy scale.

Replacing $\ve$ with the cut-off expression in \Eqq{Gdiv}, the quantum effective action including the finite contribution at high energy reads
\ba
\G^{\rm div} + \G^{\rm finite} &=& - \frac{1}{2} \ln\frac{\Lambda_\textsc{uv}^2}{\mu^2} \frac{1}{(4 \pi)^2} \int \rmd^4 \sqrt{|g|} \left(\b_2 C_{\mu\nu\rho\s} C^{\mu\nu\rho\s} + \b_3 R^2 \right)\nn
&&+\frac{1}{2} \frac{1}{(4\pi)^2} \int \rmd^4x \sqrt{|g|} \left[\b_2 C_{\mu\nu\rho\s} \ln\left(-\frac{\B}{\mu^2}\right) C^{\mu\nu\rho\s}
   + \b_3 R \ln\left(-\frac{\B}{\mu^2}\right) R \right] \nn
&=&\frac{1}{2} \frac{1}{(4\pi)^2} \int \rmd^4x \sqrt{|g|} \left[\b_2 C_{\mu\nu\rho\s} \ln\left(-\frac{\B}{\Lambda_\textsc{uv}^2}\right)C^{\mu\nu\rho\s}
   + \b_3 R \ln\left(-\frac{\B}{\Lambda_\textsc{uv}^2}\right) R \right],\nn
\label{EGammaDivFin}
\ea
as one can calculate, for instance, via covariant perturbation theory \cite{Barvinsky:1987uw,Barvinsky:1990up,Barvinsky:1990uq,Barvinsky:1994cg} stopping at second order in curvatures \cite{Barvinsky:1990up}. Now we can use the identity \cite{Codello:2012kq}
\be
C_{\mu\nu\rho\s} C^{\mu\nu\rho\s} = 2 \left(R_{\mu\nu} R^{\mu \nu} - \frac{1}{3} R^2 \right) + O({\bf Riem}^3) 
\ee
to express the divergent and finite contributions to the quantum effective action \Eq{EGammaDivFin} in the Einstein basis up to higher curvature operators, i.e.,
\ba
\G^{\rm div} + \G^{\rm finite}   & = &
  \frac{1}{2} \frac{1}{(4 \pi)^2} \int \rmd^4x \sqrt{|g|} \left[ 2  \b_2 
   R_{\mu\nu} \ln\left(-\frac{\B}{\Lambda_\textsc{uv}^2}\right) R^{\mu\nu}
  - \frac{2}{3} \b_2 R \ln\left(-\frac{\B}{\Lambda_\textsc{uv}^2}\right) R\right.\nn
&& \qquad\qquad\qquad\qquad\quad\left. + \b_3 R \ln\left(-\frac{\B}{\Lambda_\textsc{uv}^2}\right) R \right]\nn
   &=& \frac{1}{(4 \pi)^2} \int \rmd^4x \sqrt{|g|} \left[  \b_2 
   R_{\mu\nu} \ln\left(-\frac{\B}{\Lambda_\textsc{uv}^2}\right) R^{\mu\nu}
  + \frac{3 \b_3 - 2 \b_2}{6} R \ln\left(-\frac{\B}{\Lambda_\textsc{uv}^2}\right)R\right].\nn\label{EGamma2}
\ea
These are well-known results for Stelle gravity but, after some modifications, \Eqq{EGamma2} is also the one-loop quantum effective action of the theory \Eq{action} in its local limit \Eq{StelleLimitD}. In contrast with Stelle's theory, the nonlocal theory considered in this paper is finite. This has two consequences. On one hand, we can identify the cut-off scale in \Eqq{Gdiv} with the fundamental scale of theory up to an $O(1)$ rescaling. On the other hand, the scale in $1/\ve$ may differ in the two operators in \Eqq{Gdiv}. In particular, and without loss of generality, we can set
\be
C^2\,{\rm term:}\quad\Lambda_\textsc{uv}= \Lst\,,\qquad\qquad R^2\,{\rm term:}\quad \Lambda_\textsc{uv} = \Lst\, \de_0\,,\qquad \de_0>0\,,
\ee
respectively for the Weyl-square and the Ricci-square operator. 

Finally, comparing \Eqqs{EGamma2} and \Eq{LQ}, we get the relation between the one-loop coefficients in the nonlocal theory \Eq{action} and the beta functions of Stelle gravity:
\be\label{betabeta}
\b_{\rm Ric} = \frac{\b_2}{(4\pi)^2} \, , \qquad \b_R = \frac{3 \b_3 - 2 \b_2}{6 (4 \pi)^2}\,. 
\ee
We reiterate that the left-hand sides are not beta functions coming from counterterms.


\section{Geodesic completeness for massless particles} \label{appeC}

For the sake of simplicity, in this section we change notation redefining $\hat{g}^*_{\mu\nu} \rightarrow \hat{g}_{\mu\nu}$ and $\phi^* \rightarrow \phi$. Therefore, for the pair $(\hat{g}_{\mu\nu}, \phi)=(\hat{g}_{\mu\nu}^{\rm FLRW},\Mpl/\sqrt{2})$ \Eqq{solStar} becomes
\ba\label{rescal}
\hat{g}_{\mu\nu} = S(x)\, g^{\rm FLRW}_{\mu\nu} \,, \quad \phi =  \frac{\Mpl}{\sqrt{2}}\, S^{-\frac12}(x)\,, 
\ea
where $g_{\mu\nu}^{\rm FLRW}$ is the FLRW metric given in \Eqq{FLRWeta}. 


\subsection{Geodesics for massless particles}

The action for photons or general massless particles, parametrized by a parameter $p$, is
\ba
S_{\g} = \int\rmd p\, \cL_\g = \int\rmd p\, e^{-1}(p)\phi^2\hat{g}_{\mu\nu}  \dot x^\mu \dot x^\nu\,,\qquad \dot{}\coloneqq \frac{\rmd}{\rmd p}\,,
\label{Lm0}
\ea
where $e(p)$ is an auxiliary field. The action \Eq{Lm0} is invariant under general coordinate transformations ${x'}^\mu=f^\mu(x^\nu)$, the Weyl rescaling \Eq{WI}, and reparametrizations $p^\prime= f(p)$ of the worldline, since $e(p)$ transforms as $\rme^\prime(p^\prime) = e(p) (\rmd p^\prime/\rmd p)^{-1}$.

The variation with respect to $e$ gives
\ba
\frac{\de S_{\g}}{\de e} = - \int \rmd p  \, \frac{\de e}{\rme^2} \phi^2 \, \hat{g}_{\mu\nu} \, \dot{x}^{\mu} \dot{x}^\nu = 0 \quad \Longrightarrow \quad   \rmd \hat{s}^2 =  \hat{g}_{\mu\nu} \rmd x^\mu \rmd x^\nu = 0\,,
\label{ds0}
\ea
consistently with the fact that massless particles travel along the light-cone.

Varying with respect to $x^\mu$, one gets the geodesic equation in the presence of the dilaton field. In the gauge $e(p)={\rm const}$,
\ba
D^2_p[\bm{g} = \phi^2 \bm{\hat g}]\, x^\s &\coloneqq& \frac{\rmd^2 x^\s}{\rmd p^2}+\G^\s_{\mu\nu}\,\dot x^\mu \dot x^\nu\nn
&=&\frac{D^2[\bm{\hat g}]\, x^\s}{\rmd p^2} + 2 \frac{\partial_\mu \phi}{\phi} \dot x^\mu \dot x^\s-\frac{\p^\s \phi}{\phi} \dot x_\mu \dot x^\mu = 0 \,,\label{geom0}
\ea
where
\be
\G^\s_{\mu\nu}\coloneqq \frac12 g^{\rho\s}\left(\p_{\mu} g_{\nu\rho}+\p_{\nu} g_{\mu\rho}-\p_\rho g_{\mu\nu}\right)\label{leci}
\ee
is the Levi-Civita connection for the metric $\bm{g}$. When we contract \Eqq{geom0} with the velocity $\dot x_\s$ and we use $\rmd\hat{s}^2 =0$ obtained in \Eqq{ds0}, we get the on-shell condition
\ba
\dot x_\s \, D^2_p[\bm{\hat g}]\, x^\s + 2  \dot x_\s\,\frac{\p_\mu \phi}{\phi} \dot x^\mu\dot x^\s 
-  \dot x_\s \, \frac{\p^\s \phi}{\phi} \dot x^\mu \dot x_\mu = 0 \quad \Longrightarrow \quad \dot x_\s\,D^2_p[\bm{\hat g}]\, x^\s = 0 \,.
\ea
Therefore, $D^2_p[\bm{\hat g}]\, x^\s$ must be proportional to the velocity,
\ba
D^2_p[\bm{\hat g}]\, x^\s = f \, \dot x^\s\,,  \qquad f = {\rm const}\,, 
\label{constre}
\ea
which is null on the light-cone. Under a reparametrization $q=q(p)$ of the worldline, \Eqq{constre} turns into
\ba
\frac{\rmd^2 x^\s}{\rmd q^2} + \hat\G^\s_{\mu\nu}  \frac{\rmd x^\mu }{\rmd q} \frac{\rmd x^\nu}{\rmd q} 
=  \frac{\rmd x^\s}{\rmd p} \left(\frac{\rmd p}{\rmd q}\right)^2 \left(f \frac{\rmd q}{\rmd p} - \frac{\rmd^2 q}{\rmd p^2} \right).
\label{parame}
\ea
Choosing the $p$-dependence of $q$ such as to make the right-hand side of \Eqq{parame} vanish, we end up with the geodesic equation in affine parametrization. Hence, we can redefine $q \rightarrow \lambda$ and, finally, get the affinely parametrized geodesic equation for photons in the metric $\hat{g}_{\mu\nu}$, 
\be
D^2_\la[\bm{\hat g}]\,x^\s = 0 \, .
\label{affine}
\ee
This result can be obtained also noting that the geodesic equation is invariant under a Weyl transformation, where $\la$ is the conformally transformed affine parameter \cite[Example 12.2]{Narlikar:1986kr}.

We can now investigate the conservations laws based on the symmetries of the metric. Let us consider the scalar
\ba
\xi\coloneqq \hat{g}_{\mu\nu} v^\mu \frac{\rmd x^\nu}{\rmd \lambda}=\hat{g}_{\mu\nu} v^\mu \dot x^\nu\,,
\label{alpha}
\ea
where $v^\mu$ is a generic vector and, from now on, $\dot{}=\rmd/\rmd\la$. Taking the derivative of \Eqq{alpha} with respect to $\lambda$ and using \Eqqs{leci} and \Eq{affine}, we get
\ba
\frac{\rmd \xi}{\rmd \lambda} = \frac{1}{2}  v^\mu \p_\mu \hat{g}_{\rho \nu} \dot x^\rho 
\dot x^\nu + \hat{g}_{\mu\nu} \p_\rho v^\mu \dot x^\nu \dot x^\rho 
= \frac{1}{2} [\mathcal{L}_v \hat{g}]_{\rho \nu}\, \dot x^\rho \dot x^\nu\,, 
\ea
where $\mathcal{L}_v \bm{\hat g}$ is the Lie derivative of $\hat{g}_{\mu\nu}$ by a vector field $v^\mu$. Thus, if $v^\mu$ is a Killing vector field, namely $\mathcal{L}_v \bm{\hat g}=0$, then the scalar \Eq{alpha} is affinely constant:
\ba
\frac{\rmd \xi}{\rmd \lambda} =\frac{\rmd}{\rmd \lambda} \left(\hat{g}_{\mu\nu} v^\mu \dot x^\nu\right)=0\,. 
\label{CL}
\ea


\subsection{Geodesic completeness in Weyl gravity}\label{GeoCompMink}

After the rescaling \Eq{rescal}, the exact solution is \Eqq{solMink}, Minkowski spacetime in the presence of the dilaton: 
\ba
\rmd\hat{s}^2 = - \rmd\t^2 + (\rmd x^1)^2 +(\rmd x^2)^2 + (\rmd x^3)^2 \,, \qquad \phi = \frac{\Mpl}{\sqrt{2}}\,a(\t)\,.
\ea
However, it is crucial to note that the dilaton does not appear in the geodesic equation \Eq{affine} and in the conservation equation \Eq{CL}, which is only a function of $\hat{g}_{\mu\nu}$. 

It is well known that Minkowski spacetime has ten Killing vectors, four of which are
\ba    
v_0^\mu=(1, \, 0, \, 0, \, 0)^\mu \, , \qquad 
v_1^\mu=(0, \, 1, \, 0, \, 0)^\mu \, , \qquad 
v_2^\mu=(0, \,  0, \,  1, \, 0)^\mu \, , \qquad 
v_3^\mu=(0, \,  0, \,  0, \, 1)^\mu \, . \nn
\label{killing}
\ea
When replaced in \Eqq{CL}, these vectors give the following conserved quantities:
\bs\ba
&& \xi_0 \coloneqq \hat{g}_{\mu\nu} \, v_0^\mu \dot{x}^\nu = \hat{g}_{00} \, \dot{\t} = - \dot{\t}
\, , \\
&& \xi_1 \coloneqq \hat{g}_{\mu\nu} \, v_1^\mu \dot{x}^\nu = \hat{g}_{11} \, \dot{x}^1 
=   \dot{x}^1
\, , \\
&& \xi_2 \coloneqq \hat{g}_{\mu\nu} \, v_2^\mu \dot{x}^\nu =  \hat{g}_{22}  \, \dot{x}^2 
 = \dot{x}^2 
\, ,  \\
&& \xi_3 \coloneqq \hat{g}_{\mu\nu} \, v_3^\mu \dot{x}^\nu =  \hat{g}_{33} \,  \dot{x}^3
= \dot{x}^3
 \, ,
\ea\label{e0123}\es
which are the geodesic equations for massless particles in Minkowski spacetime. Since \Eqqs{e0123} are well-defined everywhere for any value of the affine parameter $\lambda$, then this spacetime is null geodesically complete. The condition \Eq{ds0}, i.e., $\rmd\hat{s}^2=0$, only imposes the following consistency relation between the conserved quantities $\xi_0, \xi_1, \xi_2, \xi_3$ in \Eqq{e0123}:
\ba
\rmd \hat{s}^2 = 
 - \xi_0^2 + \xi_1^2 + \xi_2^2 + \xi_3^2 = 0\,,
\label{explids}
\ea
which simply states that massless particles travel on the light-cone. 


\subsection{Geodesic incompleteness in general relativity} \label{GeoIncFLRW}

In contrast with the case of conformal gravity, the FLRW spacetime in Einstein gravity only possesses three out of the four Killing vectors \Eq{e0123}:
\ba
\xi_1 = a^2(\t) \, \dot{x}^1 \,, \qquad \xi_2 = a^2(\t) \, \dot{x}^2  \,, \qquad \xi_3 = a^2(\t) \, \dot{x}^3 \,.
\label{e0123FLRW}
\ea
If we now replace the above conserved quantities in the FLRW line element before the rescaling, which is zero for light,
\ba
\rmd s^2_{\rm FLRW} = a^2(\t) \left[ -\rmd\t^2 + (\rmd x^1)^2 +(\rmd x^2)^2 + (\rmd x^3)^2 \right]=0, 
\ea
we get the geodesic equation for $a(\t)\neq 0$:
\ba
 \dot\t^2 =   \frac{\xi_1^2+\xi_2^2+\xi_3^2}{a^4(\t)}\,.
 \label{FLRWsing}
\ea
For the sake of simplicity, we assume the spatial section to be flat and that it is well-defined for $\t \neq 0$. We can restrict our attention to $\t > 0$ and integrate \Eqq{FLRWsing} for the case of a radiation-dominated universe, where $a(\t) = \t/\t_0$ and $\t_0$ is conformal time today. The result for $\dot\t>0$, $\t> 0$, and $\lambda >0$ is
\ba
\hspace{-1cm}\int \rmd\t\, \frac{\t^2}{\t_0^2} = \sqrt{\xi_1^2 + \xi_2^2 + \xi_3^2} \int \rmd \lambda &\quad \Longrightarrow \quad& 
\lambda = \frac{1}{3 \sqrt{\xi_1^2 + \xi_2^2 + \xi_3^2}}\frac{\t^3}{\t_0^2}\nn
&\quad \Longrightarrow \quad& 
\t = \left(3\t_0^2\sqrt{\xi_1^2 + \xi_2^2 + \xi_3^2} \right)^{\frac{1}{3}} \lambda^{\frac{1}{3}}\,. \label{etaPos}
\ea
On the other hand, for $\dot\t<0$, $\t>0$, and $\lambda<0$, we get
\ba
\hspace{-1cm}-\int \rmd\t\, \frac{\t^2}{\t_0^2} = \sqrt{\xi_1^2 + \xi_2^2 + \xi_3^2} \int \rmd \lambda &\quad \Longrightarrow \quad& 
\lambda = -\frac{1}{3 \sqrt{\xi_1^2 + \xi_2^2 + \xi_3^2}}\frac{\t^3}{\t_0^2}\nn
&\quad \Longrightarrow \quad &
\t = \left(3\t_0^2\sqrt{\xi_1^2 + \xi_2^2 + \xi_3^2} \right)^{\frac{1}{3}} (-\lambda)^{\frac{1}{3}}\,.
 \label{etaNeg}
\ea
Since in \Eqq{etaPos} the time coordinate $\t(\lambda)$ of the massless particle evolves from the big-bang singularity for a finite amount of the affine parameter $\lambda$, then spacetime is not geodesically complete. In other words, we do not have any information about the past evolution for $\lambda \leq 0$. The two solutions \Eq{etaPos} and \Eq{etaNeg} cannot be patched together in order to get a new solution defined for all $\lambda$, since the function $\t(\lambda)$ is nonanalytic in $\lambda = 0$. 


\end{document}